\theoremstyle{plain}
\newtheorem{thm}{Theorem}
\newtheorem{lem}[thm]{Lemma}
\newtheorem{prop}[thm]{Proposition}
\newtheorem{defn}[thm]{Definition}
\newtheorem{rem}[thm]{Remark}
\newcommand{\norm}[1]{\left\lVert#1\right\rVert}
\begin{document}
%
\title{Unsupervised Learning Discriminative MIG Detectors in Nonhomogeneous Clutter}

\author{Xiaoqiang~Hua,~
        Yusuke~Ono,~
        Linyu~Peng,~\IEEEmembership{Member,~IEEE,}
        ~and~Yuting~ Xu
\thanks{This work was supported by NSFC  (Grant No. 61901479), JSPS KAKENHI (Grant
No. JP20K14365), JST CREST (Grant No. JPMJCR1914), and Keio
Gijuku Fukuzawa Memorial Fund.  We thank the editor and the anonymous referees for their constructive comments. {\it (Corresponding author: Linyu Peng.)}

X. Hua is with the College of Meteorology and Oceanography, and the College of Computer Science and Technology, National University of Defense Technology, Changsha 410073, China (e-mail: hxq712@yeah.net).

Y. Ono and L. Peng are with the Department of Mechanical Engineering, Keio University, Yokohama 223-8522, Japan (e-mail: yuu555yuu@keio.jp; l.peng@mech.keio.ac.jp).

Y. Xu is with the College of Physics, Jilin University, Changchun 130012, China (e-mail: xuyt20@mails.jlu.edu.cn).
}}

\maketitle

\begin{abstract}
Principal component analysis (PCA) is a commonly used pattern analysis method that maps high-dimensional data into a lower-dimensional space maximizing the data variance, that results in the promotion of separability of data. Inspired by the principle of PCA, a novel type of learning discriminative matrix information geometry (MIG) detectors in the unsupervised scenario are developed, and applied to signal detection in nonhomogeneous environments. Hermitian positive-definite (HPD) matrices can be used to model the sample data, while  the clutter covariance matrix is estimated by the geometric mean of a set of secondary HPD matrices. We define a projection that maps the HPD matrices in a high-dimensional manifold to a low-dimensional and  more discriminative one to increase the degree of separation of HPD matrices by maximizing the data variance.  Learning a mapping can be formulated as a two-step mini-max optimization problem in Riemannian manifolds, which can be solved by the Riemannian gradient descent algorithm. Three discriminative MIG detectors are illustrated with respect to different geometric measures, i.e., the Log-Euclidean metric, the Jensen--Bregman LogDet divergence and the symmetrized  Kullback--Leibler divergence. Simulation results  show that performance improvements of the novel MIG detectors can be achieved compared with  the conventional detectors and their state-of-the-art counterparts within nonhomogeneous environments.

\end{abstract}

\begin{IEEEkeywords}
Signal detection, matrix information geometry (MIG) detectors, unsupervised learning, manifold projection, nonhomogeneous clutter.
\end{IEEEkeywords}

\IEEEpeerreviewmaketitle
\section{Introduction}

\IEEEPARstart{I}{mproving} the performance of signal detection in nonhomogeneous clutter is imperative in many areas, including radar \cite{4303058,5934613,7593261,7384511,9714856}, sonar \cite{7906555,1703855,8060999}, communication systems \cite{8685182}. However, the detection performance is often unsatisfactory as the number of homogeneous sample data is often  limited, not to mention the presence of interferences caused by the heterogeneity. One effective approach for enhancing the detection performance in nonhomogeneous clutter is to incorporate {\it a priori} clutter information  in designing the detectors, i.e., carrying out a knowledge-aided processing (see, e.g., \cite{6166345,6020816,7426844}), and
performance analysis confirmed the advantage of such an architecture over their conventional counterparts (see also \cite{5484507,4014433,5210021,4267625,5417154,7605536}). That these knowledge-aided signal detection methods can achieve significant performance improvements is owing to the sufficient information on the clutter characteristics which is often not available priorly in practical applications. Lack of knowledge about the clutter can often yield severe performance degradation.

In recent years, exploiting matrix information geometry (MIG) to deal with the problem of signal processing has been attracting extensive attention. MIG, the geometric study of matrix manifolds,  is a relative new extension of the theory of  classical information geometry, which deals with the geometric theory of probability distributions and its applications \cite{Rao1992,Chentsov1972,Efron1975Defining,97844712097,PSSY2011,LPS2008}. The reader may refer to \cite{amari2000methods,sun2016elementary} for an introduction to classical information geometry.  Many information and signal processing problems can be equivalently  transformed into discriminational problems on matrix differentiable manifolds with proper distance or divergence functions.   For instance, in \cite{7060458}, an MIG-based clutter covariance matrix (CCM) estimator was proposed in the case of limited number of sample data, and significant signal-to-interference-plus-noise ratio  gains were achieved over several standard estimators, such as the loaded sample matrix inversion. In \cite{7472241}, a new direction of arrival (DOA) estimation approach that employs geodesic distances to estimate the direction of arrival of several sources was proposed using the MIG theory. The DOA estimation was reformulated as a single-variable optimization problem on a Riemannian manifold. Simulation results showed that the proposed method improved resolution capabilities at low signal-to-noise ratio with respect to multiple signal classification and  minimum variance distortionless response. In \cite{6573681}, the problem of CCM estimation was treated as computing the geometric barycenter associated with a geometric distance for a set of secondary basic Hermitian positive-definite (HPD) matrices that yielded significant performance improvements. 
Specially, a geometric detection scheme, which we call the MIG detector, was developed by Lapuyade-Lahorgue and Barbaresco in \cite{4721049}. In MIG detectors, by  taking geometric structures of the relevant manifolds into account, {\it a priori} knowledge on the clutter characteristics is not required.

The performance of MIG detectors is closely related to the discriminative power of the utilized geometric measures, e.g., \cite{8000811,HUA2076}. In particular for HPD manifolds, the affine invariant Riemannian metric (AIRM) and the corresponding AIRM-MIG detector have been greatly studied and applied \cite{7226283,6514112,4720937,Liu2013,CO201054,9078971,6450651,7842633}. By exploiting discriminative geometric measures, it is possible to propose MIG detectors of good performances. 
In \cite{HUA2017106}, they authors derived the geometric means and medians corresponding to two extended Kullback--Leibler (KL) divergences, the total KL divergence and the symmetrized KL divergence (SKLD), and particularly designed two MIG detectors based on the extended KL divergences. They were applied to target detection in K-distribution clutter, that evidenced performance gains over their state-of-the-art counterparts. In \cite{HUA2018232,huaetal2020}, the total Bregman divergence (TBD) was extended to  HPD matrix manifolds, and fortunately the geometric means derived by using the some of the mostly well-known convex functions could be derived in closed-form.  Simulation results showed that the corresponding TBD-MIG detectors outperformed the AIRM-MIG detector as well as the conventional detectors in nonhomogeneous clutter. In addition to the discriminative geometric measures mentioned above, other measures can also be defined in  HPD matrix manifolds. It is worth exploring new discriminative metrics  and designing the corresponding MIG detectors. A major limitation in these MIG detectors, nevertheless,  is that the detection performance is affected by different clutter characteristics as the discriminative power associated with a given geometric measure may change as the clutter changes.

To overcome the drawback, in this paper, we develop a projection that maps higher-order HPD matrices into a  lower-dimensional and more discriminative HPD manifold and enhances the separability of data in the unsupervised learning scenario; then we propose a type of discriminative MIG detectors, and apply them to signal detection in nonhomogeneous clutter. Main contributions of the current study are briefly summarized below.

\begin{enumerate}
  \item Inspired by the principle of principal component analysis (PCA), we propose a projection subject to an orthonormal constraint for the projection matrix, that enhances the separability between the target signal and the clutter. Learning the projection (matrix) by maximizing the variance of data becomes a two-step mini-max optimization problem in a Stiefel manifold and an HPD manifold, that can be solved by the Riemannian gradient descent (RGD) algorithm.
  Given a set of training HPD matrices that consists of two classes of data, one containing target signal and another containing only the clutter, the projection matrix can be obtained in an unsupervised way. One may consider that the PCA can only reconstruct the data in a better way but cannot lead to discrimination improvement between the samples. However, during the project, redundant information originally included in the higher-dimensional HPD matrices may be reduced during the manifold projection, leading to improvement of detection performance. From this aspect, the inspiration from PCA is rather indirect.
  \item A class of discriminative MIG detectors is designed by incorporating the manifold projection into the detection architecture. Specifically, the sample data is modeled as an HPD matrix with the diagonal loading structure, and the CCM is estimated by the geometric mean about  secondary HPD matrices. The CCM and the HPD matrix in the cell under test (CUT) are transformed into a more discriminative low-dimensional manifold. Consequently, signal detection is realized via MIG detector on a lower-dimensional HPD matrix manifold.
  \item Simulations performed in nonhomogeneous clutter verify the outperformance of the proposed discriminative MIG detectors in comparison with their state-of-the-art counterparts as well as the conventional detectors.
\end{enumerate}

The paper is organized as follows. The discriminative MIG detector is formulated in Section \ref{sec:pf}, and a brief introduction to MIG is given  in Section \ref{sec:MIG}. In Section \ref{sec:mp}, three geometric means are derived for the CCM estimation, and the problem of learning the projection is formulated as a two-step mini-max optimization problem in a Stiefel manifold and an HPD manifold. The performance analysis is presented in Section \ref{sec:sim}, and we conclude finally in Section \ref{sec:con}.

\textit{Notations:} We use boldface lowercase  (uppercase)  letters  to denote vectors (matrices). Matrix (or vector) transpose and conjugate transpose are denoted by the superscripts $(\cdot)^{\operatorname{T}}$ and $(\cdot)^{\operatorname{H}}$, respectively. Determinant and trace of a matrix are respectively denoted by $\operatorname{det}(\cdot)$  and $\operatorname{tr}(\cdot)$. The $N\times N$ identity matrix is denoted by $\bm{I}_N$ or simply $\bm{I}$. 
The notations $\mathbb{C}^N$ and $\mathbb{C}^{M\times N}$ represent the set of $n$-dimensional complex vectors and $M \times N$ complex matrices, respectively. The imaginary unit is $\operatorname{i}$, and  finally $\operatorname{E}[\cdot]$ denotes the statistical expectation.

\section{Problem Formulation}
\label{sec:pf}
Let $\bm{x} = [x_0, x_1, \ldots, x_{N-1}]^{\operatorname{T}}$ be the sample data collected from $N$ (temporal, spatial, or spatial-temporal) channels. In general, the problem of signal detection is interpreted as the following  binary hypothesis testing
\begin{equation} \label{eq::hypothesis_testing}
\left\{
\begin{aligned}
&\mathcal{H}_0: \left\{
\begin{aligned}
&\bm{x} = \bm{c}, \\
&\bm{x}_k = \bm{c}_k, \quad  k \in [K],
\end{aligned} \right.\\
&\mathcal{H}_1: \left\{
\begin{aligned}
&\bm{x} = \alpha \bm{p} + \bm{c}, \\
&\bm{x}_k = \bm{c}_k, \quad k \in [K],
\end{aligned} \right. \\
\end{aligned} \right.
\end{equation}
where $[K]$ denotes the set of $\{ 1,2,\ldots,K \}$ with $K$ the number of secondary data, $\bm{c}$ and $\bm{c}_k$ are the clutter data, and $\bm{x}$ and $\bm{x}_k$ denote the observation data. In particular, $\bm{x}$ and $\bm{c}$ represent data of the CUT. Here, $\mathcal{H}_0$ and $\mathcal{H}_1$ denote the null and alternative hypotheses that correspond to the absence and presence of a target signal, respectively.  The unknown  complex parameter $\alpha$ is relevant to the channel propagation effects and target reflectivity. The known steering vector $\bm{p}$ is given by
\begin{equation}
\bm{p} = \frac{1}{\sqrt{N}}\left[1, \exp\left(-\operatorname{i}2\pi f_d\right), \ldots, \exp\left(-\operatorname{i}2\pi f_d(N-1)\right)\right]^{\operatorname{T}},
\end{equation}
where $f_d$ denotes the normalized Doppler frequency.

The observation data is assumed to obey a multivariate complex Gaussian distribution with zero mean. Therefore, statistical information of the sample data is closely related to  the covariance matrix.
The power or correlation of the sample data, which can be represented as an HPD matrix,  is employed for distinguishing the target signal from the clutter. Various structures can be specified  for the HPD matrix, for instance,  the Toeplitz structure \cite{7060458}, the diagonal loading \cite{8450037}, the shrinkage estimators \cite{8052571} and the persymmetric covariance estimators \cite{8496824}. The HPD manifolds subject to different matrix structures possess different geometric structures. The resulting differences in the detection performance cased by different matrix structures will be analyzed separately. Here, we exploit the HPD matrix with the diagonal loading structure to model the sample data. The diagonal loading structure has been successfully applied in signal detection \cite{4383590,7181,5417174}.
The diagonal loading HPD matrix can be expressed by adding diagonal matrix to the sample covariance matrix (SCM), i.e.,
\begin{equation} \label{eq::diagonal_loading}
\bm{R} = \bm{r}\bm{r}^{\operatorname{H}} + \operatorname{tr}\left(\bm{r}\bm{r}^{\operatorname{H}}\right)\bm{I},
\end{equation}
where $\bm{r} = [r_0, r_1, \dots, r_{N-1}]^{\operatorname{T}}$ denotes correlation of the sample data, namely
\begin{equation}
r_l = \operatorname{E}[x_i \overline{x}_{i+l}],\quad  0\leq l \leq N-1, 1\leq i \leq N - l - 1.
\end{equation}
where $\overline{x}_i$ denotes the conjugate of $x_i$. Ergodicity of stationary Gaussian process allows us to approximate $r_l$  by the following estimator
\begin{equation}
\widetilde{r}_l = \frac{1}{N} \sum_{i=0}^{N-1- l } {x_i\overline{x}_{i+l}},\quad  0 \leq l \leq N-1.
\end{equation}

Using the diagonal loading formalism \eqref{eq::diagonal_loading}, each sample data can be represented as an HPD matrix with a diagonal loading structure as the new observation. The set of all $N\times N$ HPD matrices forms a differentiable manifold; see Section \ref{sec:MIG} for more details.  Assuming that $K$ secondary HPD matrices $\{  \bm{R}_k \}_{ k \in [K]}$ are available, we employ the geometric mean $\bm{R}_{\mathcal{G}} = \mathcal{G}(\bm{R}_1, \bm{R}_2, \ldots, \bm{R}_K)$ to estimate the CCM. From the viewpoint of MIG, the problem of binary hypothesis testing  Eq. \eqref{eq::hypothesis_testing} can be rewritten as (see e.g., \cite{huaetal2020}),
\begin{equation}
\left\{
\begin{aligned}
\mathcal{H}_0: \bm{R}=\bm{R}_\mathcal{G}, \\
\mathcal{H}_1: \bm{R}\neq\bm{R}_\mathcal{G}.
\end{aligned}
\right.
\end{equation}

Given the observation $\bm{R}_D$ and the CCM estimate $\bm{R}_{\mathcal{G}}$, by utilizing geometric structure of HPD matrix manifolds,  signal detection can be interpreted as the discrimination of two HPD matrices $\bm{R}_D$ and $\bm{R}_{\mathcal{G}}$ in a differentiable manifold.  Let us consider the null hypothesis $\mathcal{H}_0: \bm{R}=\bm{R}_\mathcal{G}$ versus the alternative hypothesis $\mathcal{H}_1: \bm{R}\neq\bm{R}_\mathcal{G}$ based on a set of observations $\{\bm{R}_k \}_{k \in [K]}$.
The problem of signal detection can be understood as to determine the inner of isosurfaces of the HPD matrix manifold determined by a distance or divergence, as illustrated in Fig. \ref{Detection_Scheme}. The hypothesis $\mathcal{H}_0$ is rejected if the observation $\bm{R}_D$ of CUT does not belong to the inner of an isosurface.

\begin{figure}[htbp]
\centering
\includegraphics[width=8.2cm,angle=0]{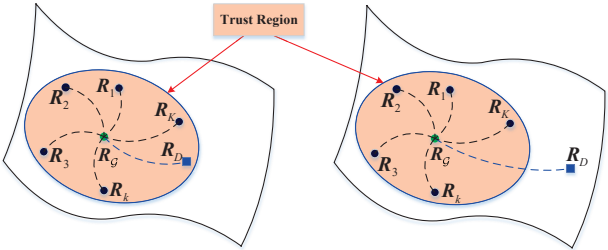}
\caption{The diagram for the geometric interpretation of signal detection}
\label{Detection_Scheme}
\end{figure}

In our detection framework, we learn a projection matrix $\bm{W} \in \mathbb{C}^{N \times M} (M \leq N)$ of full rank to maximize the variance of the data by resorting to the training HPD matrices in an unsupervised scenario, where any $N\times N$ HPD matrix $\bm{R}$ will be mapped into a more discriminative low-dimensional manifold by
\begin{equation}
f_{\bm{W}}(\bm{R}) = \bm{W}^{\operatorname{H}}\bm{R}\bm{W} \in \mathbb{C}^{M \times M},
\end{equation}
where $\bm{W}$ is conventionally assumed in the (compact and complex) Stiefel manifold
\begin{equation}
\operatorname{St}(M,\mathbb{C}^N)=\left\{\bm{A}\in\mathbb{C}^{N\times M} \mid \bm{A}^{\operatorname{H}}\bm{A}=\bm{I}_M\right\}.
\end{equation}
Consequently, the problem of signal detection becomes to determining the inner of an isosurface in an HPD matrix manifold associated to  a given distance or divergence (see Fig. \ref{Detection_Scheme}), namely
\begin{equation}
d\left(f_{\bm{W}}(\bm{R}_\mathcal{G}), f_{\bm{W}}(\bm{R}_D)\right) \mathop{\gtrless}\limits_{\mathcal{H}_0}^{\mathcal{H}_1} \gamma,
\end{equation}
where $d(\cdot,\cdot)$ is the distance or divergence that used to measure the dissimilarity between two points on the HPD matrix manifold. The hypothesis  $\mathcal{H}_1$, meaning the presence of a target signal,  is accepted if the observation $f_{\bm{W}}(\bm{R}_D)$ of CUT lies outside of an isosurface centered at $f_{\bm{W}}(\bm{R}_\mathcal{G})$ with radius $\gamma$, which is the detection threshold.


\section{Preliminaries of Matrix Information Geometry}
\label{sec:MIG}
Before moving to an MIG solution to the problem formulated in Section \ref{sec:pf} above, we briefly review the theory of MIG that is relevant to the current study in this section.

\subsection{HPD manifolds}
The general linear group $GL(N,\mathbb{F})$ consists of all $N\times N$ invertible matrices  with $\mathbb{F}$  either real $\mathbb{R}$ or complex $\mathbb{C}$. The Frobenius  metric\footnote{ It is also called the Hilbert--Schmidt inner product.} is defined by
\begin{equation}\label{eq:glmetric}
\langle \bm{X},\bm{Y}\rangle:=\operatorname{tr}(\bm{X}^{\operatorname{H}}\bm{Y}),\quad \bm{X},\bm{Y}\in GL(N,\mathbb{F}).
\end{equation}
In the current paper, our main interest is HPD matrices, that form a subspace of  $GL(N,\mathbb{C})$.

The set of $N\times N$ HPD matrices is denoted by $\mathscr{P}(N,\mathbb{C})$ which is a subset of $GL(N,\mathbb{C})$ and naturally a differentiable manifold. 
Each element $\bm{A}\in\mathscr{P}(N,\mathbb{C})$ is Hermitian and positive-definite, that is
\begin{equation}
\bm{A}^{\operatorname{H}}=\bm{A} \text{ and }   \bm{x}^{\operatorname{H}}\bm{A}\bm{x} > 0 \text{ for all } \bm{0}\neq \bm{x}\in\mathbb{C}^N.
\end{equation}
The difference of two HPD matrices can be evaluated by a distance, a divergence or other measures  defined in  $\mathscr{P}(N,\mathbb{C})$. It is crucial to specify these measures properly in applications, as different measures  will lead to different isosurfaces.

\subsection{Riemannian structures of HPD manifolds}
Except the induced subspace Frobenius  metric, the space $\mathscr{P}(N,\mathbb{C})$ is a Riemannian manifold  with the AIRM
\begin{equation}
\langle \bm{A},\bm{B} \rangle_{\bm{P}} := \operatorname{tr}\left(\bm{P}^{-1}\bm{A}\bm{P}^{-1}\bm{B}\right),\quad \bm{A},\bm{B}\in T_{\bm{P}}\mathscr{P}(N,\mathbb{C}). 
\end{equation}
Under the AIRM, its curvature is non-positive
  \cite{bridson2013metric,8624413,8060999,8920217}.  In the following, we are going to introduce some of the mostly well-known geometric measures in the manifold $\mathscr{P}(N,\mathbb{C})$ as either a metric space equipped with the Frobenius metric or a Riemannian manifold.

In the Riemannian manifold $\mathscr{P}(N,\mathbb{C})$,  exponential map and logarithm map can naturally be defined on the tangent bundle $T\mathscr{P}(N,\mathbb{C})=\bigcup\limits_{\bm{P}} T_{\bm{P}}\mathscr{P}(N,\mathbb{C})$ by  using the geodesics. They are related to matrix exponentials and matrix logarithms.   Matrix exponential for a general matrix $\bm{X}$  is defined by a Taylor series
\begin{equation}
\exp(\bm{X}) = \sum_{i=0}^{+\infty} \frac{\bm{X}^i}{i!} .
\end{equation}
Logarithm of a matrix is defined as the inversion of matrix exponential. Unfortunately, it is not always well-defined as a function.
The following lemma defines the principle logarithm of an invertible matrix together with some of its important properties, which will be used later.


\begin{lem}[\cite{Hig2008,Moa2005}] \label{lem:aa}
Let $\bm{X}$ be an invertible matrix and assume that none of its eigenvalues lie in the closed negative real line. Then, there exists a unique matrix logarithm of $\bm{X}$ whose eigenvalues lie in the strip
\begin{equation*}
\{z\in \mathbb{C}\mid -\pi<\operatorname{Im}(z)<\pi\}.
\end{equation*}
It is referred to as the principle logarithm and denoted by $\operatorname{Log}\bm{X}$.

The principle (matrix) logarithm satisfies the following properties.
\begin{itemize}
\item[(i)] Each pair of the matrices $[(\bm{X}-\bm{I})s+\bm{I}]^{-1}$,  $\bm{X}$ and $\operatorname{Log}\bm{X}$   commutes any real number $s$.
\item[(ii)]   The following matrix integral is valid:
\begin{equation*}
\begin{aligned}
&\int_0^1[(\bm{X}-\bm{I})s+\bm{I}]^{-2}\operatorname{d}\!s\\
&\quad\quad\quad =(\bm{I}-\bm{X})^{-1}[(\bm{X}-\bm{I})s+\bm{I}]^{-1}\Big|_{s=0}^1\\
&\quad\quad\quad =\bm{X}^{-1}.
\end{aligned}
\end{equation*}
\item[(iii)] Let $\bm{A}(\varepsilon)$ be an invertible matrix satisfying the unique principal logarithm existence condition above. Furthermore, assume  $\bm{A}(\varepsilon)$  depends on the  real parameter $\varepsilon$ smoothly. Then we have
\begin{equation*}
\begin{aligned}
\frac{\operatorname{d}}{\operatorname{d}\!\varepsilon}\operatorname{Log}\bm{A}(\varepsilon)&=\int_0^1
[(\bm{A}(\varepsilon)-\bm{I})s+\bm{I}]^{-1}  \frac{\operatorname{d}}{\operatorname{d}\!\varepsilon}\bm{A}(\varepsilon)\\
&\quad\quad \quad \times [(\bm{A}(\varepsilon)-\bm{I})s+\bm{I}]^{-1}\operatorname{d}\!s.
\end{aligned}
\end{equation*}
\end{itemize}
\end{lem}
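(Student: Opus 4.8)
The existence and uniqueness of $\operatorname{Log}\bm{X}$ I would take essentially from the cited references and dispatch in a line or two: since no eigenvalue of $\bm{X}$ lies on $(-\infty,0]$, the principal branch of the scalar logarithm is holomorphic on a neighborhood of the spectrum, so the primary matrix function $\operatorname{Log}\bm{X}$ is well defined through the Jordan canonical form (equivalently, the holomorphic functional calculus); its eigenvalues are the $\log\lambda_i$, which lie in the strip $-\pi<\operatorname{Im}(z)<\pi$, and uniqueness under this strip constraint follows from the injectivity of the principal branch there. I would then concentrate on properties (i)--(iii).

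For (i), the plan is to invoke the elementary fact that any two primary matrix functions of the \emph{same} matrix commute. The matrix $(\bm{X}-\bm{I})s+\bm{I}=s\bm{X}+(1-s)\bm{I}$ is an affine polynomial in $\bm{X}$, hence commutes with $\bm{X}$; on the relevant range it is invertible, since its eigenvalues $s\lambda+(1-s)$ cannot vanish because $\lambda\notin(-\infty,0]$, and the inverse of a polynomial in $\bm{X}$ is again a function of $\bm{X}$. Likewise $\operatorname{Log}\bm{X}$ is a function of $\bm{X}$. As all three are functions of the single matrix $\bm{X}$, they commute pairwise, and the only thing to verify is the short invertibility computation just noted.

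For (ii), set $\bm{M}(s)=(\bm{X}-\bm{I})s+\bm{I}$, so $\bm{M}'(s)=\bm{X}-\bm{I}$ commutes with $\bm{M}(s)$ by (i). Then $\frac{\operatorname{d}}{\operatorname{d}\!s}\bm{M}(s)^{-1}=-\bm{M}(s)^{-1}\bm{M}'(s)\bm{M}(s)^{-1}=-(\bm{X}-\bm{I})\bm{M}(s)^{-2}$, which exhibits $(\bm{I}-\bm{X})^{-1}\bm{M}(s)^{-1}$ as an antiderivative of $\bm{M}(s)^{-2}$ whenever $1\notin\operatorname{spec}(\bm{X})$. Evaluating at $\bm{M}(1)=\bm{X}$ and $\bm{M}(0)=\bm{I}$ and simplifying $(\bm{I}-\bm{X})^{-1}(\bm{X}^{-1}-\bm{I})=\bm{X}^{-1}$, via $\bm{X}^{-1}-\bm{I}=\bm{X}^{-1}(\bm{I}-\bm{X})$ and commutativity, gives the stated value. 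The degenerate case $1\in\operatorname{spec}(\bm{X})$ I would handle by continuity, or equivalently by reducing to a scalar integral per eigenvalue through simultaneous triangularization, where $\int_0^1((\lambda-1)s+1)^{-2}\operatorname{d}\!s=\lambda^{-1}$ holds for every $\lambda\neq 0$, including $\lambda=1$.

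For (iii), the main step is the integral representation
\begin{equation*}
\operatorname{Log}\bm{A}=\int_0^1(\bm{A}-\bm{I})[(\bm{A}-\bm{I})s+\bm{I}]^{-1}\operatorname{d}\!s,
\end{equation*}
which I would obtain by integrating $\frac{\operatorname{d}}{\operatorname{d}\!t}\operatorname{Log}\bm{M}(t)$ from $t=0$ to $t=1$ with $\bm{M}(t)=(\bm{A}-\bm{I})t+\bm{I}$: here everything commutes, so the commuting-case chain rule yields $\frac{\operatorname{d}}{\operatorname{d}\!t}\operatorname{Log}\bm{M}(t)=(\bm{A}-\bm{I})\bm{M}(t)^{-1}$, while $\operatorname{Log}\bm{M}(0)=0$ and $\bm{M}(1)=\bm{A}$. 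Differentiating this representation in $\varepsilon$ under the integral sign, with $\bm{M}=(\bm{A}(\varepsilon)-\bm{I})s+\bm{I}$ and $\frac{\operatorname{d}}{\operatorname{d}\!\varepsilon}\bm{M}^{-1}=-s\,\bm{M}^{-1}\dot{\bm{A}}\bm{M}^{-1}$, the integrand becomes $\dot{\bm{A}}\bm{M}^{-1}-s(\bm{A}-\bm{I})\bm{M}^{-1}\dot{\bm{A}}\bm{M}^{-1}$. The crux is the identity $s(\bm{A}-\bm{I})\bm{M}^{-1}=\bm{I}-\bm{M}^{-1}$ (because $s(\bm{A}-\bm{I})=\bm{M}-\bm{I}$), which collapses the integrand to $\bm{M}^{-1}\dot{\bm{A}}\bm{M}^{-1}$ and produces the claimed formula. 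I expect the genuine difficulty to sit precisely here: since $\dot{\bm{A}}$ need not commute with $\bm{A}(\varepsilon)$, one cannot treat $\operatorname{Log}$ as an ordinary scalar function, so both the justification for differentiating under the integral sign (uniform invertibility of $\bm{M}$ for $s\in[0,1]$ and smoothness in $\varepsilon$) and the noncommutative bookkeeping leading to the cancellation demand care.
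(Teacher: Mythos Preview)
The paper does not prove this lemma at all; it is quoted from the cited references and then invoked as a black box in Appendices~\ref{app:LEMgra}--\ref{app:SKLDgra}. Your proposal therefore supplies strictly more than the paper does, and the argument you outline is correct.

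Two remarks. First, your invertibility check for $(\bm{X}-\bm{I})s+\bm{I}$ via the eigenvalue $s\lambda+(1-s)$ only rules out a zero eigenvalue when $s\in[0,1]$: for $s\notin[0,1]$ the putative root $\lambda=1-1/s$ is a positive real number, which the hypothesis does not exclude. This is a defect of the lemma's phrasing (``any real number $s$'') rather than of your proof, and your hedge ``on the relevant range'' is the right attitude, since every use of the lemma in the paper has $s\in[0,1]$. Second, your route to (iii)---establish the integral representation $\operatorname{Log}\bm{A}=\int_0^1(\bm{A}-\bm{I})[(\bm{A}-\bm{I})s+\bm{I}]^{-1}\operatorname{d}\!s$, differentiate under the integral, and collapse the two terms using $s(\bm{A}-\bm{I})\bm{M}^{-1}=\bm{I}-\bm{M}^{-1}$---is precisely the standard derivation in the cited sources, and your bookkeeping with the noncommuting factor $\dot{\bm{A}}$ is handled correctly.
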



In the Riemannian manifold $\mathscr{P}(N,\mathbb{C})$ equipped with the AIRM, the distance of two points $\bm{X}, \bm{Y} \in \mathscr{P}(N,\mathbb{C})$ is given by the  length of the local geodesic with them as the endpoints, reading
\begin{equation}\label{eq:AIRM}
\begin{aligned}
d_R^2(\bm{X},\bm{Y}) &= \norm{ \operatorname{Log}\left(\bm{X}^{-1/2}\bm{Y}\bm{X}^{-1/2}\right) }^2\\
& = \sum_{i=1}^N \ln^2\lambda_i,
\end{aligned}
\end{equation}
where $\lambda_1,\lambda_2,\ldots,\lambda_N$  are the eigenvalues of the matrix $\bm{X}^{-1/2}\bm{Y}\bm{X}^{-1/2}$. We use  $\norm{\cdot}$ to denote the Frobenius norm $\norm{\bm{A}}^2=\operatorname{tr}(\bm{A}^{\operatorname{H}}\bm{A})$ of a matrix $\bm{A}$, induced from the Frobenius metric \eqref{eq:glmetric}.

Unfortunately,  the computational cost of the AIRM distance is often expensive in practical applications. An alternative choice, the Log-Euclidean metric (LEM) \cite{637996}, is defined as follows
\begin{equation}
\langle \bm{A},\bm{B} \rangle_{\bm{P}}^{\operatorname{LE}} := \langle D_{\bm{A}}\operatorname{Log}\bm{P}, D_{\bm{B}}\operatorname{Log}\bm{P} \rangle,
\end{equation}
where $\bm{A},\bm{B}\in T_{\bm{P}}\mathscr{P}(N,\mathbb{C})$ and $D_{\bm{A}}\operatorname{Log}\bm{P}$ denotes the directional derivative of the matrix logarithm along a tangent vector $\bm{A}$ at a point $\bm{P}$. 
The LEM distance of two HPD matrices $\bm{X}, \bm{Y}\in \mathscr{P}(N,\mathbb{C})$ is given by the length of the local geodesic as
\begin{equation}\label{D:LEM}
d_L^2(\bm{X},\bm{Y}) = \norm{\operatorname{Log}\bm{X}-\operatorname{Log}\bm{Y}}^2.
\end{equation}

\subsection{Divergences of HPD matrices}
By viewing the differentiable manifold $\mathscr{P}(N,\mathbb{C})$ as a metric space equipped with the Frobenius metric, many other geometric measures can also be defined. We will be focused on the the Jensen--Bregman LogDet divergence (JBLD) \cite{6378374} and the symmetrized  Kullback--Leibler divergence (SKLD)   \cite{HUA2017106} in the current paper.  The JBLD and SKLD of two HPD matrices $\bm{X}, \bm{Y} \in \mathscr{P}(N,\mathbb{C})$ are respectively gived by
\begin{equation}\label{D:JBLD}
d_J^2(\bm{X},\bm{Y}) =\ln \operatorname{det}\left(\frac{\bm{X}+\bm{Y}}{2}\right) - \frac{1}{2}\ln\operatorname{det}(\bm{X}\bm{Y})
\end{equation}
and
%
\begin{equation}\label{D:SKLD}
d_S^2(\bm{X},\bm{Y}) = \frac{1}{2}\operatorname{tr}\left( \bm{Y}^{-1}\bm{X} + \bm{X}^{-1}\bm{Y} - 2\bm{I}\right).
\end{equation}

Note that among all geometric measures introduced above, the AIRM, the JBLD and the SKLD are invariant with respect to affine transformations.

In the study of optimization problems in $\mathscr{P}(N,\mathbb{C})$, we often need to compute the gradient of a function $F(\bm{R})$, which is defined by the  covariant/directional derivative associated to a given metric, e.g., a Riemannian metric or simply the Frobenius metric, as follows
\begin{equation}\label{def:gra0}
\langle \nabla F(\bm{R}),\bm{A}\rangle:=\frac{\operatorname{d}}{\operatorname{d}\!\varepsilon}\Big|_{\varepsilon=0} F(\gamma(\varepsilon)),  \quad \forall \bm{A}\in T_{\bm{R}}\mathscr{P}(N,\mathbb{C}),
\end{equation}
where $\gamma:[0,1]\rightarrow \mathscr{P}(N,\mathbb{C})$ is the unique local curve satisfying $\gamma(0)=\bm{R}$ and $\dot{\gamma}(0)=\bm{A}$. By taking the linear part into account, it can  be rewritten as
\begin{equation}\label{def:gra}
\langle \nabla F(\bm{R}),\bm{A}\rangle:=\frac{\operatorname{d}}{\operatorname{d}\!\varepsilon}\Big|_{\varepsilon=0} F(\bm{R}+\varepsilon\bm{A}), \quad \forall \bm{A}\in T_{\bm{R}}\mathscr{P}(N,\mathbb{C}).
\end{equation}

\section{Geometric Means and Unsupervised Manifold Projection}
\label{sec:mp}
\subsection{Geometric Means}
It is well known that  the arithmetic mean of  a set of $K$ positive real numbers $\{x_k \}_{k \in [K]}$
can be calculated by
\begin{equation}
\widehat{x} = \frac{1}{K}\sum_{k=1}^K x_k.
\end{equation}
 In fact, the arithmetic mean is the minimum value of the sum of the squares, namely
\begin{equation}
\widehat{x} := \underset{x \in \mathbb{R}^+}{{\arg\min}} \sum_{k=1}^K | x - x_k |^2,
\end{equation}
where $| x - x_k |$ denotes  the distance between $x$ and $x_k$. Geometric mean of a set of HPD matrices can similarly be defined.

\begin{defn}
Given a set of $K$ HPD matrices $\{\bm{R}_k \}_{k \in [K]}$,
the geometric mean with respect to a geometric measure $d: \mathscr{P}(N,C) \times \mathscr{P}(N,C) \rightarrow \mathbb{R}$ is obtained through the following optimization problem
\begin{equation}
\bm{\widehat{R}} := \underset{\bm{R} \in \mathscr{P}(N,\mathbb{C})}{{\arg\min}} \sum_{k=1}^K d^2\left(\bm{R}_k,\bm{R}\right).
\label{eq:def_gm}
\end{equation}
\end{defn}

Geometric means of a set of HPD matrices can not always be calculated in closed form; alternatively, the fixed-point iteration has proven to be  effective  for calculating them numerically, e.g., \cite{637996,104240020,8000811}. In the below, we summarize the algorithms or analytic expressions for computing the geometric means corresponding to the three measures introduced above, i.e.,  the LEM distance \eqref{D:LEM}, the AIRM geodesic distance \eqref{eq:AIRM}, the JBLD \eqref{D:JBLD} and the SKLD \eqref{D:SKLD}.

\begin{prop}\label{prop:LEMm}
The LEM mean of HPD matrices $\{\bm{R}_k \}_{k \in [K]}$
is given by \cite{637996}
\begin{equation}\label{eq:LEM_Mean}
\widehat{\bm{R}} = \operatorname{exp}\bigg(\frac{1}{K}\sum_{k=1}^K \operatorname{Log}\bm{R}_k \bigg).
\end{equation}
\end{prop}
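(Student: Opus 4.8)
The plan is to exploit the fact that, with respect to the Log-Euclidean metric, the principle logarithm $\operatorname{Log}$ is a global diffeomorphism from the HPD manifold $\mathscr{P}(N,\mathbb{C})$ onto the real vector space $\mathscr{H}(N,\mathbb{C})$ of $N\times N$ Hermitian matrices, with smooth inverse the matrix exponential $\exp$. This is Lemma \ref{lem:aa} specialized to HPD matrices, whose eigenvalues are positive and hence never meet the closed negative real line. Under this identification the LEM distance \eqref{D:LEM} is precisely the Frobenius distance pulled back from $\mathscr{H}(N,\mathbb{C})$, so $(\mathscr{P}(N,\mathbb{C}),d_L)$ is isometric to a Euclidean space. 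The strategy is therefore to transport the minimization \eqref{eq:def_gm} to $\mathscr{H}(N,\mathbb{C})$, solve it there as an elementary least-squares problem, and pull the solution back through $\exp$.

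Concretely, I would set $\bm{S}_k := \operatorname{Log}\bm{R}_k$ and $\bm{S} := \operatorname{Log}\bm{R}$, all Hermitian. By \eqref{D:LEM} the objective becomes
\begin{equation}
\sum_{k=1}^K d_L^2(\bm{R}_k,\bm{R}) = \sum_{k=1}^K \norm{\bm{S}_k - \bm{S}}^2,
\end{equation}
and as $\bm{R}$ ranges over $\mathscr{P}(N,\mathbb{C})$ the variable $\bm{S}$ ranges over all of $\mathscr{H}(N,\mathbb{C})$. The right-hand side is a strictly convex quadratic in $\bm{S}$ on the real linear space $\mathscr{H}(N,\mathbb{C})$ endowed with the Frobenius inner product \eqref{eq:glmetric}. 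Its unique minimizer is found by setting the Frobenius gradient to zero, i.e. $-2\sum_{k=1}^K(\bm{S}_k-\bm{S}) = \bm{0}$, which yields the arithmetic mean $\widehat{\bm{S}} = \tfrac{1}{K}\sum_{k=1}^K \bm{S}_k$. Since the arithmetic mean of Hermitian matrices is Hermitian, $\widehat{\bm{S}}\in\mathscr{H}(N,\mathbb{C})$, and pulling back gives
\begin{equation}
\widehat{\bm{R}} = \exp\!\big(\widehat{\bm{S}}\big) = \exp\!\bigg(\frac{1}{K}\sum_{k=1}^K\operatorname{Log}\bm{R}_k\bigg),
\end{equation}
which is HPD because the exponential of a Hermitian matrix is HPD. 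Uniqueness transfers back through the bijection $\exp$, completing the argument.

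There is no serious analytic obstacle here; the only point requiring care is the justification that $\operatorname{Log}$ is a genuine bijection $\mathscr{P}(N,\mathbb{C})\to\mathscr{H}(N,\mathbb{C})$ with inverse $\exp$, so that the change of variables is legitimate and the infimum over $\mathscr{P}(N,\mathbb{C})$ is attained exactly at the image of the Euclidean centroid. Alternatively, one could bypass the change of variables and differentiate $F(\bm{R})=\sum_{k=1}^K\norm{\operatorname{Log}\bm{R}_k-\operatorname{Log}\bm{R}}^2$ directly via the gradient \eqref{def:gra} together with the derivative-of-logarithm formula in Lemma \ref{lem:aa}(iii); after invoking the commutativity and integral identities of Lemma \ref{lem:aa}(i)--(ii), the first-order condition collapses to the same stationary equation $\operatorname{Log}\widehat{\bm{R}}=\tfrac{1}{K}\sum_{k=1}^K\operatorname{Log}\bm{R}_k$. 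This route is computationally heavier, so I would prefer the isometry argument for its transparency.
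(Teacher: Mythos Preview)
Your argument is correct and is the standard one: the principal logarithm is a bijection $\mathscr{P}(N,\mathbb{C})\to\mathscr{H}(N,\mathbb{C})$ pulling the LEM distance \eqref{D:LEM} back to the Frobenius distance, so the Fr\'echet mean problem becomes a Euclidean least-squares problem whose unique minimizer is the arithmetic mean of the $\operatorname{Log}\bm{R}_k$, and applying $\exp$ gives \eqref{eq:LEM_Mean}. The alternative route you sketch via Lemma~\ref{lem:aa}(iii) would also work and indeed collapses to the same stationary equation.

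Note, however, that the paper does not supply its own proof of this proposition; it simply states the result and attributes it to the reference. So there is no ``paper's approach'' to compare against---you have filled in what the paper leaves to the literature, and your isometry argument is both efficient and complete.
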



\begin{prop}\label{prop:AIRMm}
The AIRM mean of $\{\bm{R}_k \}_{k \in [K]}$
is determined by \cite{Moa2005}
\begin{equation}
\sum_{k=1}^K\operatorname{Log}\left(\bm{R}_k^{-1}\bm{\widehat{R}}\right)=0,
\end{equation}
which can be obtained using the following fixed-point iteration \cite{Moa2006}:
\begin{equation}
\begin{aligned}
\widehat{\bm{R}}_{t+1}&=a \widehat{\bm{R}}_t\\
&+(a-1)\sum_{k=2}^K\operatorname{Log}\left(\exp\left(\frac{\widehat{\bm{R}}_t}{2}\right){\bm{R}}_k^{-1}\exp\left(\frac{\widehat{\bm{R}}_t}{2}\right)\right),
\end{aligned}
\end{equation}
where $1-1/K<a<1$, $t$ denotes the iterative index, and the initial value is
\begin{equation}
\widehat{\bm{R}}_0=\frac{1}{K}\sum_{k=1}^K\operatorname{Log}\bm{R}_k.
\end{equation}
\end{prop}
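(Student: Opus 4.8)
The plan is to characterize the AIRM mean as the unique minimizer of the Karcher-type energy $\Phi(\bm{R}) := \sum_{k=1}^K d_R^2(\bm{R}_k,\bm{R})$ from the defining problem \eqref{eq:def_gm}, and then to turn the vanishing of its gradient into the stated matrix equation. First I would record that, since $\mathscr{P}(N,\mathbb{C})$ equipped with the AIRM has non-positive sectional curvature and is a complete, simply connected Hadamard manifold \cite{bridson2013metric}, each squared geodesic distance $d_R^2(\bm{R}_k,\cdot)$ is strictly geodesically convex; hence $\Phi$ is strictly convex and admits a unique minimizer $\widehat{\bm{R}}$, which is characterized by the first-order condition $\nabla\Phi(\widehat{\bm{R}})=0$. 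This reduces the proposition to a gradient computation.

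Next I would compute $\nabla d_R^2(\bm{R}_k,\bm{R})$ using the gradient definition \eqref{def:gra}. The starting point is the identity $d_R^2(\bm{R}_k,\bm{R})=\operatorname{tr}\big(\operatorname{Log}(\bm{R}_k^{-1}\bm{R})^2\big)$, valid because $\bm{R}_k^{-1}\bm{R}$ is similar to the HPD matrix $\bm{R}_k^{-1/2}\bm{R}\bm{R}_k^{-1/2}$ and therefore shares its positive eigenvalues appearing in \eqref{eq:AIRM}. Perturbing along a Hermitian direction $\bm{H}$ and setting $\bm{A}=\bm{R}_k^{-1}\bm{R}$ so that $\dot{\bm{A}}=\bm{R}_k^{-1}\bm{H}$, I would differentiate the matrix logarithm by Lemma \ref{lem:aa}(iii), then use the commutativity in Lemma \ref{lem:aa}(i) together with the cyclic property of the trace to bring $\operatorname{Log}(\bm{A})$ and the resolvent $[(\bm{A}-\bm{I})s+\bm{I}]^{-1}$ side by side, and finally collapse the resulting integral by the identity $\int_0^1[(\bm{A}-\bm{I})s+\bm{I}]^{-2}\operatorname{d}\!s=\bm{A}^{-1}$ of Lemma \ref{lem:aa}(ii). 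After simplifying $\bm{A}^{-1}\bm{R}_k^{-1}=\bm{R}^{-1}$, this yields
\begin{equation*}
\frac{\operatorname{d}}{\operatorname{d}\!\varepsilon}\Big|_{\varepsilon=0} d_R^2(\bm{R}_k,\bm{R}+\varepsilon\bm{H}) = 2\operatorname{tr}\big(\operatorname{Log}(\bm{R}_k^{-1}\bm{R})\,\bm{R}^{-1}\bm{H}\big),
\end{equation*}
so that summing over $k$ gives $\nabla\Phi(\bm{R})\propto\big(\sum_{k}\operatorname{Log}(\bm{R}_k^{-1}\bm{R})\big)\bm{R}^{-1}$.

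To conclude the characterization I would set this to zero at $\widehat{\bm{R}}$. The delicate point — and the step I expect to be the main obstacle — is that the factor $\big(\sum_{k}\operatorname{Log}(\bm{R}_k^{-1}\widehat{\bm{R}})\big)\widehat{\bm{R}}^{-1}$ is a priori only required to be Frobenius-orthogonal to all Hermitian $\bm{H}$, which by itself forces merely its Hermitian part to vanish. I would remove this ambiguity with the congruence identity $\operatorname{Log}(\bm{R}_k^{-1}\widehat{\bm{R}})=\widehat{\bm{R}}^{-1/2}\operatorname{Log}\big(\widehat{\bm{R}}^{1/2}\bm{R}_k^{-1}\widehat{\bm{R}}^{1/2}\big)\widehat{\bm{R}}^{1/2}$, which shows that $\big(\sum_{k}\operatorname{Log}(\bm{R}_k^{-1}\widehat{\bm{R}})\big)\widehat{\bm{R}}^{-1}=\widehat{\bm{R}}^{-1/2}\bm{L}\,\widehat{\bm{R}}^{-1/2}$ with $\bm{L}:=\sum_{k}\operatorname{Log}\big(\widehat{\bm{R}}^{1/2}\bm{R}_k^{-1}\widehat{\bm{R}}^{1/2}\big)$ Hermitian. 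Being a congruence of a Hermitian matrix, this factor is genuinely Hermitian, so orthogonality to every Hermitian $\bm{H}$ forces it to be zero; since the congruence is by an invertible matrix, $\bm{L}=0$, and hence $\sum_{k}\operatorname{Log}(\bm{R}_k^{-1}\widehat{\bm{R}})=0$, as claimed.

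Finally, for the fixed-point iteration I would follow \cite{Moa2006}: one checks that any fixed point of the recursion satisfies the optimality equation just derived, and that the relaxation parameter $a\in(1-1/K,1)$ makes the map a contraction (equivalently, a guaranteed descent step for $\Phi$) in a neighbourhood of $\widehat{\bm{R}}$, so that the iterates converge to the unique mean from the stated initialization $\widehat{\bm{R}}_0=\frac{1}{K}\sum_{k}\operatorname{Log}\bm{R}_k$. Since this convergence analysis is already established in \cite{Moa2006}, I would cite it rather than reproduce it.
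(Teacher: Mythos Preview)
Your derivation is correct, but note that the paper does not actually prove this proposition: it simply states the characterizing equation with a citation to \cite{Moa2005} and the fixed-point scheme with a citation to \cite{Moa2006}, without any accompanying argument or appendix. So there is no ``paper's own proof'' to compare against beyond those external references.

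That said, your approach is a natural and self-contained one, and it is well-aligned with the tools the paper provides: you use the Hadamard structure of $\mathscr{P}(N,\mathbb{C})$ under the AIRM (mentioned in Section~\ref{sec:MIG}) for uniqueness, and you carry out the gradient computation using exactly the integral representation and commutation identities of Lemma~\ref{lem:aa}, which is also how the paper computes gradients in Appendices~\ref{app:LEMgra}--\ref{app:SKLDgra}. Your treatment of the Hermitian subtlety via the congruence $\operatorname{Log}(\bm{R}_k^{-1}\widehat{\bm{R}})=\widehat{\bm{R}}^{-1/2}\operatorname{Log}(\widehat{\bm{R}}^{1/2}\bm{R}_k^{-1}\widehat{\bm{R}}^{1/2})\widehat{\bm{R}}^{1/2}$ is the right way to close the argument. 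For the iteration, citing \cite{Moa2006} is precisely what the paper does, so your plan matches the paper's level of detail on that part.
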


\begin{prop}\label{prop:JBLDm}
The JBLD mean of HPD matrices $\{\bm{R}_k \}_{k \in [K]}$
can be obtained through the fixed-point iteration \cite{104240020,8000811}:
\begin{equation}\label{eq:JBLD_Mean}
\bm{\widehat{R}}_{t+1} = \Bigg(\frac{1}{K}\sum_{k=1}^K  \bigg(\frac{\bm{\widehat{R}}_t + \bm{R}_k}{2}\bigg)^{-1} \Bigg)^{-1}.
\end{equation}
\end{prop}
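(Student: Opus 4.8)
The plan is to characterize the JBLD mean $\widehat{\bm{R}}$ as a critical point of the objective function $F(\bm{R}) = \sum_{k=1}^K d_J^2(\bm{R}_k,\bm{R})$ from the optimization problem \eqref{eq:def_gm}, and then to show that the first-order optimality condition rearranges exactly into the asserted fixed-point relation. Since $\mathscr{P}(N,\mathbb{C})$ is an open subset of the Hermitian matrices, any interior minimizer must satisfy $\nabla F(\widehat{\bm{R}}) = 0$, where the gradient is understood through the directional-derivative definition \eqref{def:gra}. So the first task is to compute $\langle\nabla F(\bm{R}),\bm{A}\rangle$ for an arbitrary tangent (Hermitian) direction $\bm{A}\in T_{\bm{R}}\mathscr{P}(N,\mathbb{C})$.

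Next I would differentiate the two pieces of $d_J^2$ separately using the standard identity $\frac{\operatorname{d}}{\operatorname{d}\!\varepsilon}\big|_0 \ln\det(\bm{M}+\varepsilon\bm{N}) = \operatorname{tr}(\bm{M}^{-1}\bm{N})$. For the term $\ln\det\big(\frac{\bm{R}_k+\bm{R}}{2}\big)$, perturbing $\bm{R}\mapsto \bm{R}+\varepsilon\bm{A}$ yields $\frac12\operatorname{tr}\big((\frac{\bm{R}_k+\bm{R}}{2})^{-1}\bm{A}\big)$; in the term $-\frac12\ln\det(\bm{R}_k\bm{R})$ only $\ln\det\bm{R}$ depends on $\bm{R}$, since $\ln\det\bm{R}_k$ is constant and drops out, contributing $-\frac12\operatorname{tr}(\bm{R}^{-1}\bm{A})$. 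Summing over $k$ and collecting the common factor gives
\begin{equation*}
\langle \nabla F(\bm{R}),\bm{A}\rangle = \frac12\operatorname{tr}\!\left(\left[\sum_{k=1}^K\Big(\frac{\bm{R}_k+\bm{R}}{2}\Big)^{-1}-K\bm{R}^{-1}\right]\bm{A}\right).
\end{equation*}
Because every bracketed matrix is Hermitian and $\bm{A}$ ranges over all Hermitian directions, setting the gradient to zero forces the bracket itself to vanish, i.e. $\sum_{k=1}^K\big(\frac{\bm{R}_k+\bm{R}}{2}\big)^{-1}=K\bm{R}^{-1}$. Dividing by $K$ and inverting both sides reproduces $\bm{R} = \big(\frac1K\sum_{k=1}^K (\frac{\bm{R}_k+\bm{R}}{2})^{-1}\big)^{-1}$, and replacing the implicit $\bm{R}$ on the right by $\widehat{\bm{R}}_t$ and the resolved one on the left by $\widehat{\bm{R}}_{t+1}$ gives precisely the stated iteration \eqref{eq:JBLD_Mean}.

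The routine derivative computation is not where the difficulty lies; the delicate points are (i) justifying that this stationary equation singles out the global minimizer rather than a saddle, and (ii) establishing that the fixed-point map $\Phi(\bm{R}) = \big(\frac1K\sum_{k=1}^K(\frac{\bm{R}_k+\bm{R}}{2})^{-1}\big)^{-1}$ actually converges to that minimizer from a suitable initialization. For (i) I would appeal to the uniqueness/convexity of the JBLD barycenter; for (ii) the natural route is to show that $\Phi$ preserves positive-definiteness and is a monotone (or contractive) self-map with respect to an appropriate order or metric on $\mathscr{P}(N,\mathbb{C})$, so that a monotone-convergence or fixed-point argument applies. I expect the convergence analysis in (ii) to be the main obstacle, and in the interest of brevity it is the part one would most naturally delegate to the cited references rather than prove from scratch.
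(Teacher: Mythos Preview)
The paper does not supply its own proof of this proposition; it merely states the iteration and cites the references, so there is nothing in the paper to compare against. Your derivation of the stationarity condition is correct and is exactly the standard route taken in the cited works: differentiate $F(\bm{R})=\sum_k d_J^2(\bm{R}_k,\bm{R})$, obtain $\sum_k\big(\tfrac{\bm{R}_k+\bm{R}}{2}\big)^{-1}=K\bm{R}^{-1}$, and read off the fixed-point map. Your caveats about convexity/uniqueness of the JBLD barycenter and convergence of the iteration are also well placed and are precisely the points those references address, so deferring to them is appropriate here.
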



\begin{prop}\label{prop:SKLDm}
The SKLD mean of HPD matrices $\{\bm{R}_k \}_{k \in [K]}$
is
\begin{equation}\label{eq:SKL_Mean}
\bm{\widehat{R}} = \bm{A}^{-1/2}\left( \bm{A}^{1/2}\bm{B}\bm{A}^{1/2} \right)^{1/2}\bm{A}^{-1/2},
\end{equation}
where
\begin{equation}
\bm{A} = \sum_{k=1}^K \bm{R}_k^{-1}, \quad \bm{B} = \sum_{k=1}^K \bm{R}_k.
\end{equation}
\end{prop}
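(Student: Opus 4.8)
The plan is to tackle the defining optimization \eqref{eq:def_gm} for the SKLD head-on by locating the unique critical point of $F(\bm{R}):=\sum_{k=1}^K d_S^2(\bm{R}_k,\bm{R})$ over $\mathscr{P}(N,\mathbb{C})$. First I would insert the definition \eqref{D:SKLD}, so that each summand is $d_S^2(\bm{R}_k,\bm{R})=\tfrac{1}{2}\operatorname{tr}(\bm{R}^{-1}\bm{R}_k+\bm{R}_k^{-1}\bm{R}-2\bm{I})$, and collect the sum using $\bm{A}=\sum_k\bm{R}_k^{-1}$ and $\bm{B}=\sum_k\bm{R}_k$ to obtain the compact form
\begin{equation*}
F(\bm{R})=\tfrac{1}{2}\operatorname{tr}\left(\bm{R}^{-1}\bm{B}+\bm{A}\bm{R}-2K\bm{I}\right).
\end{equation*}

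Next I would compute the Frobenius gradient through the directional derivative \eqref{def:gra}, using the standard identity $\frac{\operatorname{d}}{\operatorname{d}\!\varepsilon}\big|_{\varepsilon=0}(\bm{R}+\varepsilon\bm{V})^{-1}=-\bm{R}^{-1}\bm{V}\bm{R}^{-1}$ together with the cyclic invariance of the trace. This gives $\langle\nabla F(\bm{R}),\bm{V}\rangle=\tfrac{1}{2}\operatorname{tr}\big((\bm{A}-\bm{R}^{-1}\bm{B}\bm{R}^{-1})\bm{V}\big)$; since $\bm{A}-\bm{R}^{-1}\bm{B}\bm{R}^{-1}$ is Hermitian, one may read off $\nabla F(\bm{R})=\tfrac{1}{2}(\bm{A}-\bm{R}^{-1}\bm{B}\bm{R}^{-1})$. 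Equating it to zero yields the algebraic Riccati-type equation
\begin{equation*}
\bm{R}\bm{A}\bm{R}=\bm{B}.
\end{equation*}

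The crux of the proof is to solve this matrix equation for an HPD matrix $\bm{R}$, and this is the step I expect to be the main obstacle. The key device is a congruence by $\bm{A}^{1/2}$ (well defined, since $\bm{A}$ is HPD as a sum of HPD matrices): multiplying both sides gives $(\bm{A}^{1/2}\bm{R}\bm{A}^{1/2})^{2}=\bm{A}^{1/2}\bm{B}\bm{A}^{1/2}$. Writing $\bm{M}:=\bm{A}^{1/2}\bm{R}\bm{A}^{1/2}$, which is HPD whenever $\bm{R}$ is, and observing that the right-hand side $\bm{A}^{1/2}\bm{B}\bm{A}^{1/2}$ is HPD, the existence and uniqueness of the HPD square root forces $\bm{M}=(\bm{A}^{1/2}\bm{B}\bm{A}^{1/2})^{1/2}$; undoing the congruence then produces exactly \eqref{eq:SKL_Mean}. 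The care here lies in confirming that the positive-definite branch of the square root is the only admissible one, so that the solution is genuinely unique.

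Finally, to guarantee this critical point is the global minimizer rather than a saddle, I would verify strict convexity of $F$ on the HPD cone: the term $\operatorname{tr}(\bm{A}\bm{R})$ is linear in $\bm{R}$, while $\operatorname{tr}(\bm{R}^{-1}\bm{B})=\operatorname{tr}(\bm{B}^{1/2}\bm{R}^{-1}\bm{B}^{1/2})$ is strictly convex because matrix inversion is operator convex on $\mathscr{P}(N,\mathbb{C})$ and the trace preserves convexity. Strict convexity makes the stationary point unique and globally minimizing, so $\widehat{\bm{R}}=\bm{A}^{-1/2}(\bm{A}^{1/2}\bm{B}\bm{A}^{1/2})^{1/2}\bm{A}^{-1/2}$ is the SKLD mean.
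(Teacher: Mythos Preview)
Your proposal is correct and follows essentially the same route as the paper: compute the Frobenius gradient of the SKLD objective, set it to zero to obtain the Riccati-type equation $\bm{R}\bm{A}\bm{R}=\bm{B}$, and solve via the congruence by $\bm{A}^{1/2}$ and the unique HPD square root. Your addition of the strict-convexity argument to certify the critical point as the global minimizer is a welcome refinement that the paper's proof omits.
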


\begin{proof}
A proof is provided in Appendix \ref{app:SKLD}; see also \cite{HPL2021}.
\end{proof}

\subsection{Unsupervised Manifold Projection}

In this subsection, we introduce the manifold projection that maps HPD matrices from a high-dimensional manifold to a more discriminative lower-dimensional one by maximizing the variance of data.

Recall that the variance of a set of vectors $\{\bm{x}_k \}_{k \in [K]}$
in a Euclidean space is  given by
\begin{equation}
\operatorname{Var}= \frac{1}{K}\sum_{k=1}^K \norm{\bm{x}_k - \bm{\widehat{x}}}_2^2 \quad \text{with} \quad \bm{\widehat{x}} = \frac{1}{K}\sum_{k=1}^K \bm{x}_k,
\end{equation}
where $\norm{\cdot}_2$ denotes the $l_2$ norm, and $\bm{\widehat{x}}$ is the mean of the set of vectors.

Given a set of HPD matrices $\{\bm{R}_i \}_{i \in [J+K]}$
that contains $J$ CCMs and $K$ HPD matrices with a target signal, the variance can similarly be defined as
\begin{equation}\label{eq:var}
\operatorname{Var}= \frac{1}{J+K}\sum_{i=1}^{J+K} d^2\left(\bm{R}_i, \bm{\widehat{R}}\right).
\end{equation}
where $\bm{\widehat{R}}$ denotes the mean of $J+K$ HPD matrices, which can be derived using  Eq. \eqref{eq:def_gm} with respect to the LEM distance, the JBLD or the SKLD. Note that the variance \eqref{eq:var}  can be interpreted as a deterministic counterpart of the variance function of  a probability distribution defined in the HPD manifold \cite{7819552}.

As briefly introduced in Section \ref{sec:pf},  we propose a manifold projection that maps HPD matrices into a more lower-dimensional manifold  maximizing the data variance. The projection is defined as
\begin{equation}
\begin{aligned}
f_{\bm{W}}:\mathscr{P}(N,\mathbb{C})&\rightarrow \mathscr{P}(M,\mathbb{C})\\
\bm{R}&\mapsto \bm{W}^{\operatorname{H}}\bm{R}\bm{W},
\end{aligned}
\end{equation}
where $M\leq N$ and $\bm{W}\in \operatorname{St}(M,\mathbb{C}^N)\subset \mathbb{C}^{N\times M}$. Obviously, $\bm{W}$ is of maximal rank and $\bm{W}^{\operatorname{H}}\bm{W}=\bm{I}_M$.
Therefore, for a set of HPD matrices $\{\bm{R}_i \}_{i \in [J+K]}$
in $\mathscr{P}(N,\mathbb{C})$, learning a mapping to achieve maximal variance is equivalent to searching a projection matrix $\bm{W}$ in the Stiefel manifold. Namely, the problem becomes solving the optimization problem
\begin{equation}\label{eq:maxeq}
\begin{aligned}
\overline{\bm{W}} &:= \underset{\bm{W}\in \operatorname{St}(M,\mathbb{C}^N) }{\arg \max} \frac{1}{J+K} \sum_{i=1}^{J+K} d^2\left(f_{\bm{W}}(\bm{R}_i), \bm{\widehat{Z}} \right)\\
& ~ =\underset{\bm{W}\in \operatorname{St}(M,\mathbb{C}^N) }{\arg \max} \frac{1}{J+K} \sum_{i=1}^{J+K} d^2\left(\bm{W}^{\operatorname{H}}\bm{R}_i\bm{W}, \bm{\widehat{Z}} \right),
\end{aligned}
\end{equation}
where $\bm{\widehat{Z}}$ is the geometric mean of the set $\{\bm{W}^{\operatorname{H}}\bm{R}_i\bm{W}\}_{i \in [J+K]}$ in $\mathscr{P}(M,\mathbb{C})$, namely
\begin{equation}
\bm{\widehat{Z}} = \underset{\bm{Z} \in \mathscr{P}(M,\mathbb{C})}{{\arg\min}} \sum_{i=1}^{J+K} d^2\left(\bm{W}^{\operatorname{H}}\bm{R}_i\bm{W},\bm{Z}\right).
\end{equation}


\begin{rem}\label{rem8}
Solving the optimization problem \eqref{eq:maxeq} is a very complex and nonlinear problem as $\widehat{\bm{Z}}\in\mathscr{P}(M,\mathbb{C})$ also depends on $\bm{W}$. Although for the LEM distance and the SKLD, we can  obtain the means (see Propositions \ref{prop:LEMm} and \ref{prop:SKLDm}) respectively as
\begin{equation}
\widehat{\bm{Z}}_L=\exp\left(\frac{1}{J+K}\sum_{i=1}^{J+K}\operatorname{Log}\left(\bm{W}^{\operatorname{H}}\bm{R}_i\bm{W}\right)\right)
\end{equation}
and
\begin{equation}
\widehat{\bm{Z}}_S=\bm{A}^{-1/2}\left( \bm{A}^{1/2}\bm{B}\bm{A}^{1/2} \right)^{1/2}\bm{A}^{-1/2},
\end{equation}
where
\begin{equation}
\bm{A} = \sum_{i=1}^{J+K}\left(\bm{W}^{\operatorname{H}} \bm{R}_i\bm{W}\right)^{-1}, \quad \bm{B} = \sum_{i=1}^{J+K} \bm{W}^{\operatorname{H}}\bm{R}_i\bm{W}.
\end{equation}
\end{rem}

Since it is difficult to solve the projection matrix  $\overline{\bm{W}}$  from \eqref{eq:maxeq} in closed-form, we formulate it as a two-step mini-max optimization problem as follows:
\begin{equation}\label{eq:opt_problem}
\begin{aligned}
\overline{\bm{W}}_{t+1} &= \underset{\bm{W}\in \operatorname{St}(M,\mathbb{C}^N) }{{\arg\max}} \frac{1}{J+K} \sum_{i=1}^{J+K} d^2\left(\bm{W}^{\operatorname{H}}\bm{R}_i\bm{W}, \bm{\widehat{Z}}_t \right), \\
\bm{\widehat{Z}}_{t+1} &= \underset{\bm{Z} \in \mathscr{P}(M,\mathbb{C})}{{\arg\min}} \sum_{i=1}^{J+K} d^2\left(\overline{\bm{W}}_{t+1}^{\operatorname{H}}\bm{R}_i\overline{\bm{W}}_{t+1},\bm{Z}\right), \\
\end{aligned}
\end{equation}
where  $t$ denotes the iterative step. At each step, the minimal problem for $\widehat{\bm{Z}}$ can either be solved analytically by Remark \ref{rem8} or numerically by Proposition \ref{prop:JBLDm}; the maximal problem can be transformed into a minimal problem
\begin{equation}
\underset{\bm{W}\in \operatorname{St}(M,\mathbb{C}^N) }{{\arg\min}} \psi(\bm{W}),
\end{equation}
that can be solved 
by the RGD algorithm \eqref{RGD} (see also\cite{Smith1993,Udr1994}), where at each step $t$,
\begin{equation}\label{eq:psi}
\psi(\bm{W})= - \frac{1}{J+K} \sum_{i=1}^{J+K} d^2\left(\bm{W}^{\operatorname{H}}\bm{R}_i\bm{W}, \bm{\widehat{Z}}_t \right).
\end{equation}

The Riemannian gradient of  a function $\psi(\bm{W})$ defined on the Stiefel manifold $\operatorname{St}(M,\mathbb{C}^N)$ is given by \cite{AMS2008}
\begin{equation}\label{eq:Rgrad}
\operatorname{grad}\psi(\bm{W}) = \nabla \psi(\bm{W})-\bm{W}\times \operatorname{sym}\left(\bm{W}^{\operatorname{H}}\nabla \psi(\bm{W})\right),
\end{equation}
where
\begin{equation}
\operatorname{sym}(\bm{A})=\frac{\bm{A}+\bm{A}^{\operatorname{H}}}{2}
\end{equation}
denotes the symmetric part of a matrix $\bm{A}$, and $\nabla\psi(\bm{W})$ is the Euclidean gradient induced from the Frobenius metric. The RGD algorithm reads
\begin{equation}\label{RGD}
\bm{W}_{l+1}=\exp_{\bm{W}_l}\left(-\eta_l\operatorname{grad}\psi(\bm{W}_l)\right),
\end{equation}
where $\eta_l$ is the step size, and $\exp:T\operatorname{St}(M,\mathbb{C}^N)\rightarrow \operatorname{St}(M,\mathbb{C}^N)$ is the exponential map associated to the Euclidean metric of the Stiefel manifold. For more details, the reader may refer to \cite{AMS2008,HPL2021}.

To compile the RGD algorithm \eqref{RGD}, the Euclidean gradient of the function $\psi(\bm{W})$ is needed. Note that the Frobenius metric \eqref{eq:glmetric} can be extended to $N \times M$ matrices, namely
\begin{equation}\label{eq:fmst}
\langle \bm{X},\bm{Y}\rangle =\operatorname{tr}\left(\bm{X}^{\operatorname{H}}\bm{Y}\right),\quad \bm{X},\bm{Y}\in\mathbb{C}^{N\times M}.
\end{equation}


\begin{prop}\label{prop:LEMgra}
The Euclidean gradient of the function $\psi(\bm{W})$ defined by \eqref{eq:psi} associated with the LEM is given by
\begin{equation}
\begin{aligned}
\nabla \psi(\bm{W}) = &-\frac{4}{J+K} \sum_{i=1}^{J+K} \bm{R}_i\bm{W}\Big(\bm{V}^{-1}\operatorname{Log}\bm{V} \\
&\quad-\int_0^1[(\bm{V}-\bm{I})s+\bm{I}]^{-1} \left(\operatorname{Log}\widehat{\bm{Z}}_t \right) \\
&\quad\quad \quad\quad\quad \quad   \times    [(\bm{V}-\bm{I})s+\bm{I}]^{-1}\operatorname{d}\!s \Big),
\end{aligned}
\end{equation}
where $\bm{V}=\bm{W}^{\operatorname{H}}\bm{R}_i\bm{W}$.
\end{prop}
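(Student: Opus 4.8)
The plan is to compute the directional derivative of $\psi$ along an arbitrary direction $\bm{A}\in\mathbb{C}^{N\times M}$ via the definition \eqref{def:gra} with the extended Frobenius metric \eqref{eq:fmst}, and then read off $\nabla\psi(\bm{W})$. Since $\psi$ in \eqref{eq:psi} is a sum over $i$, it suffices to differentiate a single summand $g_i(\bm{W})=d_L^2(\bm{V},\widehat{\bm{Z}}_t)$ with $\bm{V}=\bm{W}^{\operatorname{H}}\bm{R}_i\bm{W}$, treating $\widehat{\bm{Z}}_t$ as fixed at this step. Writing the LEM distance \eqref{D:LEM} as $g_i=\operatorname{tr}[(\operatorname{Log}\bm{V}-\operatorname{Log}\widehat{\bm{Z}}_t)^2]$ (both logarithms are Hermitian since $\bm{V},\widehat{\bm{Z}}_t\in\mathscr{P}(M,\mathbb{C})$), and setting $\bm{L}:=\operatorname{Log}\bm{V}-\operatorname{Log}\widehat{\bm{Z}}_t$, the chain rule and cyclicity give $\frac{d}{d\varepsilon}g_i=2\operatorname{tr}[\bm{L}\,\frac{d}{d\varepsilon}\operatorname{Log}\bm{V}]$.

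First I would differentiate the logarithm through $\bm{V}$. By Lemma \ref{lem:aa}(iii),
\[
\frac{\operatorname{d}}{\operatorname{d}\!\varepsilon}\operatorname{Log}\bm{V}=\int_0^1 \bm{P}_s\,\Big(\frac{\operatorname{d}}{\operatorname{d}\!\varepsilon}\bm{V}\Big)\,\bm{P}_s\operatorname{d}\!s,\qquad \bm{P}_s:=[(\bm{V}-\bm{I})s+\bm{I}]^{-1},
\]
where $\frac{d}{d\varepsilon}\bm{V}\big|_{\varepsilon=0}=\bm{A}^{\operatorname{H}}\bm{R}_i\bm{W}+\bm{W}^{\operatorname{H}}\bm{R}_i\bm{A}$ comes from $\bm{V}=(\bm{W}+\varepsilon\bm{A})^{\operatorname{H}}\bm{R}_i(\bm{W}+\varepsilon\bm{A})$. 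Substituting and using cyclic invariance of the trace together with the Hermitian symmetry of $\bm{P}_s$ and $\bm{L}$, the two terms reduce to $2\operatorname{tr}[\bm{T}\bm{A}^{\operatorname{H}}]$ and $2\operatorname{tr}[\bm{T}^{\operatorname{H}}\bm{A}]$, where $\bm{T}=\bm{R}_i\bm{W}\bm{G}$ and $\bm{G}:=\int_0^1\bm{P}_s\bm{L}\bm{P}_s\operatorname{d}\!s$ is Hermitian. These two contributions are complex conjugates of one another, so the real directional derivative collects them into $4\operatorname{Re}\operatorname{tr}[\bm{T}^{\operatorname{H}}\bm{A}]=4\langle\bm{T},\bm{A}\rangle$; hence $\nabla g_i=4\bm{R}_i\bm{W}\bm{G}$, and the overall constant $-4/(J+K)$ then appears from the prefactor and sign in \eqref{eq:psi}.

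The remaining task is to evaluate $\bm{G}=\int_0^1\bm{P}_s(\operatorname{Log}\bm{V})\bm{P}_s\operatorname{d}\!s-\int_0^1\bm{P}_s(\operatorname{Log}\widehat{\bm{Z}}_t)\bm{P}_s\operatorname{d}\!s$. For the first integral I would invoke Lemma \ref{lem:aa}(i): the matrices $\bm{P}_s$, $\bm{V}$ and $\operatorname{Log}\bm{V}$ mutually commute, so $\operatorname{Log}\bm{V}$ factors out of the integral, and Lemma \ref{lem:aa}(ii) gives $\int_0^1\bm{P}_s^2\operatorname{d}\!s=\int_0^1[(\bm{V}-\bm{I})s+\bm{I}]^{-2}\operatorname{d}\!s=\bm{V}^{-1}$. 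Therefore the first integral collapses to $\bm{V}^{-1}\operatorname{Log}\bm{V}$, whereas $\operatorname{Log}\widehat{\bm{Z}}_t$ does \emph{not} commute with $\bm{V}$ and must be left inside the integral. Assembling these two pieces into $\nabla g_i$ and summing over $i$ yields exactly the stated expression.

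The main obstacle I anticipate is the bookkeeping in the complex setting: differentiating carefully through $\bm{V}=\bm{W}^{\operatorname{H}}\bm{R}_i\bm{W}$ so that both the $\bm{A}$ and the $\bm{A}^{\operatorname{H}}$ dependencies are retained, and checking that their trace terms are genuine conjugates so that they combine into the single factor $4$ rather than $2$. The commutation step that collapses the $\operatorname{Log}\bm{V}$ integral to $\bm{V}^{-1}\operatorname{Log}\bm{V}$ is routine once Lemma \ref{lem:aa}(i)--(ii) are available, but it is essential that the same simplification fails for the $\operatorname{Log}\widehat{\bm{Z}}_t$ term, which is precisely why the final gradient must retain one matrix integral.
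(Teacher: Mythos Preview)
Your proposal is correct and follows essentially the same route as the paper's proof: both differentiate a single summand $F_i(\bm{W})=\norm{\operatorname{Log}\bm{V}-\operatorname{Log}\widehat{\bm{Z}}_t}^2$ via \eqref{def:gra}, apply Lemma~\ref{lem:aa}(iii) to $\frac{\operatorname{d}}{\operatorname{d}\varepsilon}\operatorname{Log}\bm{V}$ with $\frac{\operatorname{d}}{\operatorname{d}\varepsilon}\bm{V}\big|_{\varepsilon=0}=\bm{A}^{\operatorname{H}}\bm{R}_i\bm{W}+\bm{W}^{\operatorname{H}}\bm{R}_i\bm{A}$, and then use Lemma~\ref{lem:aa}(i)--(ii) to collapse the $\operatorname{Log}\bm{V}$ integral to $\bm{V}^{-1}\operatorname{Log}\bm{V}$ while leaving the $\operatorname{Log}\widehat{\bm{Z}}_t$ integral intact. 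If anything, your treatment of the complex bookkeeping (why the two conjugate trace terms combine to give the factor $4$) is more explicit than the paper's, which simply passes from $2\operatorname{tr}[\cdots(\bm{X}^{\operatorname{H}}\bm{R}_i\bm{W}+\bm{W}^{\operatorname{H}}\bm{R}_i\bm{X})\cdots]$ to $4\operatorname{tr}[\cdots\bm{W}^{\operatorname{H}}\bm{R}_i\bm{X}\cdots]$ without comment.
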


\begin{proof} See Appendix \ref{app:LEMgra}.
\end{proof}

\begin{prop}\label{prop:AIRMgra}
The Euclidean gradient of the function $\psi(\bm{W})$ with respect to the AIRM is given by
\begin{equation}
\begin{aligned}
\nabla \psi(\bm{W})&=\frac{4}{J+K}\sum_{i=1}^{J+K}\bm{R}_i\bm{W}\left(\bm{W}^{\operatorname{H}}\bm{R}_i\bm{W}\right)^{-1}\\
&\quad\quad\quad\quad\quad \quad \times \operatorname{Log}\left(\widehat{\bm{Z}}_t\left(\bm{W}^{\operatorname{H}}\bm{R}_i\bm{W}\right)^{-1}\right).
\end{aligned}
\end{equation}
\end{prop}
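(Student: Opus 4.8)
The plan is to reduce the claim to differentiating a single summand and then to exploit the affine invariance of the AIRM distance \eqref{eq:AIRM}, which is precisely what makes the resolvent integral of Lemma \ref{lem:aa} collapse and yields a closed form without the surviving integral seen in the LEM case (Proposition \ref{prop:LEMgra}). Writing $\bm{V}_i=\bm{W}^{\operatorname{H}}\bm{R}_i\bm{W}$ and $\bm{M}_i=\widehat{\bm{Z}}_t\bm{V}_i^{-1}$, I would first record that by the similarity invariance of the spectrum, $d_R^2(\bm{V}_i,\widehat{\bm{Z}}_t)=\operatorname{tr}\!\big[(\operatorname{Log}\bm{M}_i)^2\big]$: although $\bm{M}_i$ is not Hermitian, it is similar (via $\widehat{\bm{Z}}_t^{1/2}$) to the HPD matrix $\widehat{\bm{Z}}_t^{1/2}\bm{V}_i^{-1}\widehat{\bm{Z}}_t^{1/2}$, so $\operatorname{Log}\bm{M}_i$ is well defined by Lemma \ref{lem:aa} and the trace is real and equals $\sum_j\ln^2\lambda_j$.

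Next I would differentiate along $\bm{W}\mapsto\bm{W}+\varepsilon\bm{A}$ in the sense of \eqref{def:gra}. The chain rule gives $\dot{\bm{V}}_i=\bm{A}^{\operatorname{H}}\bm{R}_i\bm{W}+\bm{W}^{\operatorname{H}}\bm{R}_i\bm{A}$ (Hermitian, since each $\bm{R}_i$ is), and $\frac{\operatorname{d}}{\operatorname{d}\!\varepsilon}\operatorname{tr}[(\operatorname{Log}\bm{M}_i)^2]=2\operatorname{tr}[\operatorname{Log}\bm{M}_i\,\frac{\operatorname{d}}{\operatorname{d}\!\varepsilon}\operatorname{Log}\bm{M}_i]$. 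Inserting the integral formula of Lemma \ref{lem:aa}(iii) for $\frac{\operatorname{d}}{\operatorname{d}\!\varepsilon}\operatorname{Log}\bm{M}_i$ and using part (i) to commute $\operatorname{Log}\bm{M}_i$ through each resolvent $[(\bm{M}_i-\bm{I})s+\bm{I}]^{-1}$, cyclicity of the trace merges the two resolvents into a single squared one; the integral $\int_0^1[(\bm{M}_i-\bm{I})s+\bm{I}]^{-2}\operatorname{d}\!s=\bm{M}_i^{-1}$ from part (ii) then collapses the expression to $2\operatorname{tr}[\bm{M}_i^{-1}\operatorname{Log}\bm{M}_i\,\dot{\bm{M}}_i]$. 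Substituting $\dot{\bm{M}}_i=-\widehat{\bm{Z}}_t\bm{V}_i^{-1}\dot{\bm{V}}_i\bm{V}_i^{-1}$ and simplifying $\bm{M}_i^{-1}\operatorname{Log}\bm{M}_i\,\dot{\bm{M}}_i=-\operatorname{Log}(\widehat{\bm{Z}}_t\bm{V}_i^{-1})\dot{\bm{V}}_i\bm{V}_i^{-1}$ yields $\frac{\operatorname{d}}{\operatorname{d}\!\varepsilon}d_R^2(\bm{V}_i,\widehat{\bm{Z}}_t)=-2\operatorname{tr}[\bm{V}_i^{-1}\operatorname{Log}(\widehat{\bm{Z}}_t\bm{V}_i^{-1})\dot{\bm{V}}_i]$.

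It then remains to substitute $\dot{\bm{V}}_i$, account for the sign and normalization in \eqref{eq:psi}, and read off the gradient via \eqref{def:gra} in the (real) Frobenius pairing \eqref{eq:fmst}. Setting $\bm{P}_i=\bm{V}_i^{-1}\operatorname{Log}(\widehat{\bm{Z}}_t\bm{V}_i^{-1})$, cyclicity turns the derivative of $\psi$ into $\frac{2}{J+K}\sum_i\operatorname{tr}[\bm{P}_i(\bm{A}^{\operatorname{H}}\bm{R}_i\bm{W}+\bm{W}^{\operatorname{H}}\bm{R}_i\bm{A})]$. The key observation here is that $\bm{P}_i$ is in fact Hermitian: using the similarity $\operatorname{Log}(\widehat{\bm{Z}}_t\bm{V}_i^{-1})=\bm{V}_i^{1/2}\operatorname{Log}(\bm{V}_i^{-1/2}\widehat{\bm{Z}}_t\bm{V}_i^{-1/2})\bm{V}_i^{-1/2}$ one obtains $\bm{P}_i=\bm{V}_i^{-1/2}\operatorname{Log}(\bm{V}_i^{-1/2}\widehat{\bm{Z}}_t\bm{V}_i^{-1/2})\bm{V}_i^{-1/2}$, manifestly Hermitian. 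Consequently the two summands are complex conjugates and combine to $2\operatorname{Re}\operatorname{tr}[\bm{A}^{\operatorname{H}}\bm{R}_i\bm{W}\bm{P}_i]$, which supplies the factor $4$. Matching against $\langle\nabla\psi(\bm{W}),\bm{A}\rangle$ gives $\nabla\psi(\bm{W})=\frac{4}{J+K}\sum_i\bm{R}_i\bm{W}(\bm{W}^{\operatorname{H}}\bm{R}_i\bm{W})^{-1}\operatorname{Log}\!\big(\widehat{\bm{Z}}_t(\bm{W}^{\operatorname{H}}\bm{R}_i\bm{W})^{-1}\big)$, as asserted.

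The hard part is the middle step: differentiating the logarithm of the non-Hermitian matrix $\bm{M}_i$ and verifying that Lemma \ref{lem:aa} still applies (which it does, through the similar HPD matrix, so the eigenvalues are positive reals and avoid the negative real axis), and then correctly discharging the complex bookkeeping so that the $\bm{A}$- and $\bm{A}^{\operatorname{H}}$-variations merge into the factor $4$. The Hermiticity of $\bm{P}_i$ is what makes that merger clean and avoids any spurious symmetrization. The payoff relative to Proposition \ref{prop:LEMgra} is structural: affine invariance lets the sandwiched $\operatorname{Log}\widehat{\bm{Z}}_t$ of the LEM computation disappear, so part (ii) of Lemma \ref{lem:aa} reduces the resolvent integral to $\bm{M}_i^{-1}$ and no integral survives in the final expression.
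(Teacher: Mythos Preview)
Your proposal is correct and follows the same route the paper sketches: the paper's own proof merely writes out $\psi(\bm{W})$ in the AIRM form and says ``following a similar proof of Appendix~\ref{app:LEMgra}, the result can be directly obtained,'' with details omitted. You have supplied precisely those details---the collapse of the resolvent integral via Lemma~\ref{lem:aa}(i)--(ii) because $\operatorname{Log}\bm{M}_i$ commutes with the resolvents, and the Hermiticity of $\bm{P}_i=\bm{V}_i^{-1}\operatorname{Log}(\widehat{\bm{Z}}_t\bm{V}_i^{-1})$ that lets the $\bm{A}$- and $\bm{A}^{\operatorname{H}}$-contributions merge into the factor~$4$---both of which are implicit in the paper's reference to the LEM computation.
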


\begin{proof}
The corresponding function $\psi(\bm{W})$ is
\begin{equation}
\psi(\bm{W})=-\frac{1}{J+K}\sum_{i=1}^{J+K}\operatorname{tr}\left(\operatorname{Log}^2\left(\left(\bm{W}^{\operatorname{H}}\bm{R}_i\bm{W}\right)^{-1}\widehat{\bm{Z}}_t\right)\right).
\end{equation} Using definition of the Euclidean gradient and following a similar proof of Appendix \ref{app:LEMgra}, the result can be directly obtained. Similar computation details are omitted here.

\end{proof}

\begin{prop}\label{prop:JBLDgra}
The Euclidean gradient of $ \psi(\bm{W})$ associated with the JBLD is given by
\begin{equation}
\begin{aligned}
\nabla \psi(\bm{W}) &=-\frac{1}{J+K}\sum_{i=1}^{J+K} \bm{R}_i\bm{W}\Big(2\left(\bm{W}^{\operatorname{H}}\bm{R}_i\bm{W}+\widehat{\bm{Z}}_t\right)^{-1} \\
&\quad \quad\quad \quad \quad \quad \quad\quad  -\left(\bm{W}^{\operatorname{H}}\bm{R}_i\bm{W}\right)^{-1}\Big).
\end{aligned}
\end{equation}
\end{prop}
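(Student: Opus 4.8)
The plan is to obtain $\nabla\psi(\bm{W})$ directly from the defining relation of the Euclidean gradient \eqref{def:gra}, extended to $\mathbb{C}^{N\times M}$ with the Frobenius metric \eqref{eq:fmst}: for an arbitrary direction $\bm{A}\in\mathbb{C}^{N\times M}$ I would compute $\frac{\operatorname{d}}{\operatorname{d}\!\varepsilon}\big|_{\varepsilon=0}\psi(\bm{W}+\varepsilon\bm{A})$ and read off the gradient by matching it against $\operatorname{Re}\langle\nabla\psi(\bm{W}),\bm{A}\rangle$. First I would insert the JBLD \eqref{D:JBLD} into \eqref{eq:psi} and write each summand as $\ln\det\big(\tfrac{\bm{V}_i+\widehat{\bm{Z}}_t}{2}\big)-\tfrac12\ln\det(\bm{V}_i\widehat{\bm{Z}}_t)$ with $\bm{V}_i:=\bm{W}^{\operatorname{H}}\bm{R}_i\bm{W}$. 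Since $\widehat{\bm{Z}}_t$ is held fixed while the maximization over $\bm{W}$ in \eqref{eq:opt_problem} is performed, the term $\ln\det\widehat{\bm{Z}}_t$ is constant and does not contribute to the derivative.

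The key input is the first-order variation of $\bm{V}_i$. Expanding $(\bm{W}+\varepsilon\bm{A})^{\operatorname{H}}\bm{R}_i(\bm{W}+\varepsilon\bm{A})$ and keeping the linear part yields $\dot{\bm{V}}_i:=\frac{\operatorname{d}}{\operatorname{d}\!\varepsilon}\big|_{\varepsilon=0}\bm{V}_i=\bm{A}^{\operatorname{H}}\bm{R}_i\bm{W}+\bm{W}^{\operatorname{H}}\bm{R}_i\bm{A}$. Applying the standard identity $\frac{\operatorname{d}}{\operatorname{d}\!\varepsilon}\ln\det\bm{M}(\varepsilon)=\operatorname{tr}(\bm{M}^{-1}\dot{\bm{M}})$ to each log-det term, with the factor $\tfrac12$ in the first determinant cancelling against the inverse of $\tfrac12(\bm{V}_i+\widehat{\bm{Z}}_t)$, I would obtain that the derivative of the $i$-th summand equals $\operatorname{tr}(\bm{S}_i\dot{\bm{V}}_i)$, where $\bm{S}_i:=(\bm{V}_i+\widehat{\bm{Z}}_t)^{-1}-\tfrac12\bm{V}_i^{-1}$ is Hermitian because $\bm{V}_i$ and $\widehat{\bm{Z}}_t$ are HPD.

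Finally I would substitute $\dot{\bm{V}}_i$ and symmetrize. Using cyclicity of the trace together with the Hermiticity of $\bm{R}_i$ and $\bm{S}_i$, the piece $\operatorname{tr}(\bm{S}_i\bm{W}^{\operatorname{H}}\bm{R}_i\bm{A})$ is the complex conjugate of $\operatorname{tr}(\bm{A}^{\operatorname{H}}\bm{R}_i\bm{W}\bm{S}_i)$, so the two pieces add up to $2\operatorname{Re}\operatorname{tr}(\bm{A}^{\operatorname{H}}\bm{R}_i\bm{W}\bm{S}_i)=2\operatorname{Re}\langle\bm{R}_i\bm{W}\bm{S}_i,\bm{A}\rangle$. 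Matching against the gradient pairing identifies the contribution of the $i$-th term as $2\bm{R}_i\bm{W}\bm{S}_i=\bm{R}_i\bm{W}\big(2(\bm{V}_i+\widehat{\bm{Z}}_t)^{-1}-\bm{V}_i^{-1}\big)$; inserting the prefactor $-\tfrac{1}{J+K}$ from \eqref{eq:psi} and summing over $i$ reproduces the stated formula. I expect the bookkeeping in this final symmetrization to be the main obstacle: one must verify that the two halves of $\dot{\bm{V}}_i$ yield genuinely conjugate traces so that the real-part pairing agrees with the convention in \eqref{def:gra}, and track the cancellation of the $\tfrac12$ factors carefully so as to land on the coefficient $2$ multiplying $(\bm{V}_i+\widehat{\bm{Z}}_t)^{-1}$ rather than $1$. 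A useful consistency check is that, unlike the LEM case of Proposition \ref{prop:LEMgra} where an extra factor arises from differentiating a squared norm, here the log-det is not squared, so only the symmetrization factor of $2$ appears.
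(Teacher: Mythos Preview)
Your proposal is correct and follows essentially the same route as the paper: you write $\psi$ as a sum of $F_i$'s, perturb $\bm{W}\mapsto\bm{W}+\varepsilon\bm{A}$, apply the identity $\frac{\operatorname{d}}{\operatorname{d}\!\varepsilon}\ln\det\bm{M}=\operatorname{tr}(\bm{M}^{-1}\dot{\bm{M}})$ (the paper states this as Lemma~\ref{lem:det}), and then combine the two Hermitian-conjugate pieces of $\dot{\bm{V}}_i$ to produce the factor of $2$. The only cosmetic difference is that you make the real-part pairing explicit in the final identification, while the paper absorbs the symmetrization directly into the trace expression before reading off $\nabla F_i(\bm{W})$.
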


\begin{proof}
See Appendix \ref{app:JBLDgra}.
\end{proof}

\begin{prop}\label{prop:SKLDgra}
The Euclidean gradient of $ \psi(\bm{W})$ associated with the SKLD is given by
\begin{equation}
\begin{aligned}
\nabla \psi(\bm{W}) &= -\frac{1}{J+K} \sum_{i=1}^{J+K} \bm{R}_i\bm{W}\Big( \bm{\widehat{Z}}_t^{-1}   \\
&\quad \quad - \left(\bm{W}^{\operatorname{H}}\bm{R}_i\bm{W}\right)^{-1}\bm{\widehat{Z}}_t\left(\bm{W}^{\operatorname{H}}\bm{R}_i\bm{W}\right)^{-1}\Big).\\
\end{aligned}
\end{equation}
\end{prop}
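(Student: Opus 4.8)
The plan is to differentiate a single summand $d_S^2(\bm{W}^{\operatorname{H}}\bm{R}_i\bm{W},\widehat{\bm{Z}}_t)$ and then sum over $i$ with the prefactor $-1/(J+K)$. Throughout I treat $\widehat{\bm{Z}}_t$ as a constant, since it is frozen at step $t$ in the mini-max scheme \eqref{eq:opt_problem}, abbreviate $\bm{V}=\bm{W}^{\operatorname{H}}\bm{R}_i\bm{W}$, and work from the definition \eqref{def:gra} of the Euclidean gradient with respect to the Frobenius metric \eqref{eq:fmst} on $\mathbb{C}^{N\times M}$. First I would record the first-order variation of $\bm{V}$: substituting $\bm{W}+\varepsilon\bm{A}$ for $\bm{W}$ and expanding gives $\dot{\bm{V}}:=\frac{\operatorname{d}}{\operatorname{d}\!\varepsilon}\big|_{\varepsilon=0}\bm{V}=\bm{A}^{\operatorname{H}}\bm{R}_i\bm{W}+\bm{W}^{\operatorname{H}}\bm{R}_i\bm{A}$.

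Next I would differentiate the SKLD \eqref{D:SKLD}. Because only $\bm{V}$ depends on $\varepsilon$, I use $\frac{\operatorname{d}}{\operatorname{d}\!\varepsilon}\operatorname{tr}(\widehat{\bm{Z}}_t^{-1}\bm{V})=\operatorname{tr}(\widehat{\bm{Z}}_t^{-1}\dot{\bm{V}})$ together with the standard identity $\frac{\operatorname{d}}{\operatorname{d}\!\varepsilon}\bm{V}^{-1}=-\bm{V}^{-1}\dot{\bm{V}}\bm{V}^{-1}$ to get $\frac{\operatorname{d}}{\operatorname{d}\!\varepsilon}\operatorname{tr}(\bm{V}^{-1}\widehat{\bm{Z}}_t)=-\operatorname{tr}(\bm{V}^{-1}\dot{\bm{V}}\bm{V}^{-1}\widehat{\bm{Z}}_t)$. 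Collecting the two pieces and using cyclicity of the trace to pull $\dot{\bm{V}}$ to the right, the directional derivative of the summand becomes $\frac{1}{2}\operatorname{tr}(\bm{M}\dot{\bm{V}})$ with the Hermitian matrix $\bm{M}:=\widehat{\bm{Z}}_t^{-1}-\bm{V}^{-1}\widehat{\bm{Z}}_t\bm{V}^{-1}$.

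Then I would insert $\dot{\bm{V}}=\bm{A}^{\operatorname{H}}\bm{R}_i\bm{W}+\bm{W}^{\operatorname{H}}\bm{R}_i\bm{A}$ and split into two traces. The key observation is that, since $\bm{M}$ and $\bm{R}_i$ are Hermitian, the two resulting traces are complex conjugates of one another, so their sum equals $2\operatorname{Re}\operatorname{tr}(\bm{M}\bm{W}^{\operatorname{H}}\bm{R}_i\bm{A})$ and the factor $\tfrac12$ cancels. Rewriting this single trace in the canonical form $\langle\bm{G},\bm{A}\rangle=\operatorname{tr}(\bm{G}^{\operatorname{H}}\bm{A})$ identifies the gradient of the summand as $\bm{G}=\bm{R}_i\bm{W}\bm{M}=\bm{R}_i\bm{W}\big(\widehat{\bm{Z}}_t^{-1}-\bm{V}^{-1}\widehat{\bm{Z}}_t\bm{V}^{-1}\big)$, where I used $\bm{R}_i^{\operatorname{H}}=\bm{R}_i$ and $\bm{M}^{\operatorname{H}}=\bm{M}$. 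Summing over $i$ and reinstating $-1/(J+K)$ gives the stated expression.

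The trace manipulations are routine; the one point requiring care, and the main obstacle, is the complex-analytic bookkeeping in the last step. Because the direction $\bm{A}$ enters $\dot{\bm{V}}$ through both $\bm{A}$ and $\bm{A}^{\operatorname{H}}$, one must verify that the two summands are genuinely conjugate so that the real directional derivative matches a single Frobenius inner product against the claimed gradient, rather than silently discarding the conjugate term. This is the same mechanism underlying the JBLD computation of Appendix \ref{app:JBLDgra}, and it is in fact simpler here, since the SKLD involves only matrix inverses and traces and requires neither a matrix logarithm nor the integral representation from Lemma \ref{lem:aa} that is needed for the LEM case in Appendix \ref{app:LEMgra}.
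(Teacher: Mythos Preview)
Your proposal is correct and follows essentially the same route as the paper's proof in Appendix \ref{app:SKLDgra}: write the loss as a sum of terms $F_i$, compute the directional derivative via $\dot{\bm{V}}=\bm{A}^{\operatorname{H}}\bm{R}_i\bm{W}+\bm{W}^{\operatorname{H}}\bm{R}_i\bm{A}$ and the identity $\frac{\operatorname{d}}{\operatorname{d}\!\varepsilon}\bm{V}^{-1}=-\bm{V}^{-1}\dot{\bm{V}}\bm{V}^{-1}$, and then collapse the two conjugate trace terms into a single Frobenius pairing to read off $\nabla F_i(\bm{W})=\bm{R}_i\bm{W}\big(\widehat{\bm{Z}}_t^{-1}-\bm{V}^{-1}\widehat{\bm{Z}}_t\bm{V}^{-1}\big)$. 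If anything, you are slightly more explicit than the paper about the complex-conjugate bookkeeping that justifies dropping the factor $\tfrac12$.
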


\begin{proof}
See Appendix \ref{app:SKLDgra}.
\end{proof}

\subsection{Complexity Analysis}
In this subsection, we will briefly show the  complexity of the calculation of LEM, AIRM, JBLD, SKLD means given by Propositions \ref{prop:LEMm}, \ref{prop:AIRMm}, \ref{prop:JBLDm}, \ref{prop:SKLDm} and the arithmetic mean, as well as the Euclidean gradients of $\psi(\bm{W})$ with respect to the LEM, AIRM, JBLD, SKLD, respectively. The latter are given by Propositions \ref{prop:LEMgra}, \ref{prop:AIRMgra}, \ref{prop:JBLDgra}, \ref{prop:SKLDgra}. For simplicity, we only keep the leading terms; for numerical iterations, we only provide the computational complexity for a single step.

The complexity figures assume that  $K$ number of $N\times N$ HPD matrices are given and the arithmetic with individual elements has complexity $O(1)$. The lower HPD manifold is $M$-dimensional. The following facts are used: $\bm{R}^{-1}  \sim O(N^3)$ and $\operatorname{Log}\bm{R} \sim O(N^4)$. Matrix exponential in all algorithms only deal with Hermitian matrices, and one way to calculate their exponentials
is through eigenvalue decomposition, whose complexity is $O(N^3)$, same as that of matrix inversion.  

    \begin{table}[ht]
  \caption{Computational complexity of the means}
\centering
\begin{tabular}{|c|c|}
        \hline
          Geometric measures  & Complexity  \\
        \hline
        Arithmetic mean &  $O(N^2(K-1))$ \\
        \hline
    LEM (Proposition \ref{prop:LEMm})&  $ O(N^4K) $ \\
        \hline
       AIRM (Proposition \ref{prop:AIRMm}, per iteration)  &  $O(N^4(K-1))$  \\
        \hline
        JBLD  (Proposition \ref{prop:JBLDm}, per iteration) & $O(N^3(K+1))$  \\
        \hline
        SKLD (Proposition \ref{prop:SKLDm}) &  $O(N^3(K+6))$ \\
        \hline
    \end{tabular}
    \label{tab:m}
    \end{table}

    It is clear from TABLE \ref{tab:m} that the arithmetic mean costs least  time, followed by the SKLD mean. Although both of them are Riemannian distances, computation of the LEM mean is much faster than the AIRM mean.

     \begin{table}[ht]
  \caption{Computational complexity of the Euclidean gradients}
\centering
\begin{tabular}{|c|c|}
        \hline
        Geometric measures & Complexity for each step (mod $J+K$)  \\
        \hline
    LEM (Proposition \ref{prop:LEMgra})&  $ O(2M^4) +O(N^2M)$ \\
        \hline
       AIRM (Proposition \ref{prop:AIRMgra})  &  $O(M^4)+O(2N^2M)$  \\
        \hline
        JBLD  (Proposition \ref{prop:JBLDgra}) & $O(2M^3)+O(2NM^2)+O(2N^2M)$  \\
        \hline
        SKLD (Proposition \ref{prop:SKLDgra}) &  $O(4M^3)+O(2NM^2)+O(2N^2M)$ \\
        \hline
    \end{tabular}
    \label{tab:gra}
    \end{table}

    From TABLE \ref{tab:gra}, we notice that computation of gradients of the divergences, i.e., the JBLD and the  SKLD, costs less time  compared with the Riemannian distances, i.e., the LEM and the AIRM.  Main reason is again the latter depend on matrix logarithm.

%

\section{Simulation Results}
\label{sec:sim}

In this section, we perform simulations to verify the performance advantage of the detectors proposed in the current paper, which are compared with the state-of-the-art counterparts. 
\subsection{Environment Setup}

 The simulations are performed in a non-homogeneous clutter, specifically, in a Gaussian clutter in the presence of interferences. We generate the sample data by resorting to an $N$-dimensional complex circular Gaussian distribution with zero mean and the known covariance matrix
\begin{equation}
\bm{C} = \sigma_c^2\bm{C}_0 + \sigma_n^2 \bm{I},
\end{equation}
where $\sigma_c^2\bm{C}_0$ denotes the clutter with $\sigma_c^2$ the clutter power while $\sigma_n^2\bm{I}$ is the thermal noise with $\sigma_n$ the noise power. Therefore, the clutter-to-noise ratio (CNR) is given by \begin{equation}
CNR=\frac{\sigma_c^2}{\sigma_n^2}.
\end{equation}
 The structure of the CCM $\bm{C}_0$ is Gaussian shaped with one-lag correlation coefficient $\rho$, whose entries are given by
\begin{equation}
[\bm{C}_0]_{i,j} = \rho^{| i-j|} \exp\left(\operatorname{i}2\pi f_c (i-j)\right), \quad i,j=1,2,\ldots,N.
\end{equation}
Here, $f_c$ is the normalized Doppler frequency.  $K$ secondary HPD matrices derived from the diagonal loading formalism \eqref{eq::diagonal_loading} are employed to estimate the CCM matrix as $\bm{R}_\mathcal{G}$. The HPD matrix $\bm{R}_D$ in the CUT is computed by the sample data $\bm{x}_D$. In the following, the parameters are chosen as $\sigma_n^2 = 1$, CNR$=25$ dB, $\rho = 0.95$ and $f_c = 0.1$.

\subsection{The Training Data}


The dimension of the sample data is set to be $N=8$.  The normalized Doppler frequency of target signal is set to $f_s = 0.2$.  Two interferences are injected into the secondary data with the normalized Doppler frequency $f = 0.22$. The training dataset  consists of two subsets with the size of $2000$ each: the set of CCM and the set of HPD matrices containing a signal with SCR$=25$ dB. Fig. \ref{Distance} shows the distance between the CCMs and the HPD matrices with a target signal for different measures.  In particular,  within the cases of AIRM, LEM and  JBLD, the clutter-clutter distances are more scattered compared with the  clutter-target distances, while conversely the clutter-target distances are more scattered under the SKLD.

\begin{figure*}[htbp]
\centering
\includegraphics[width=16cm,angle=0]{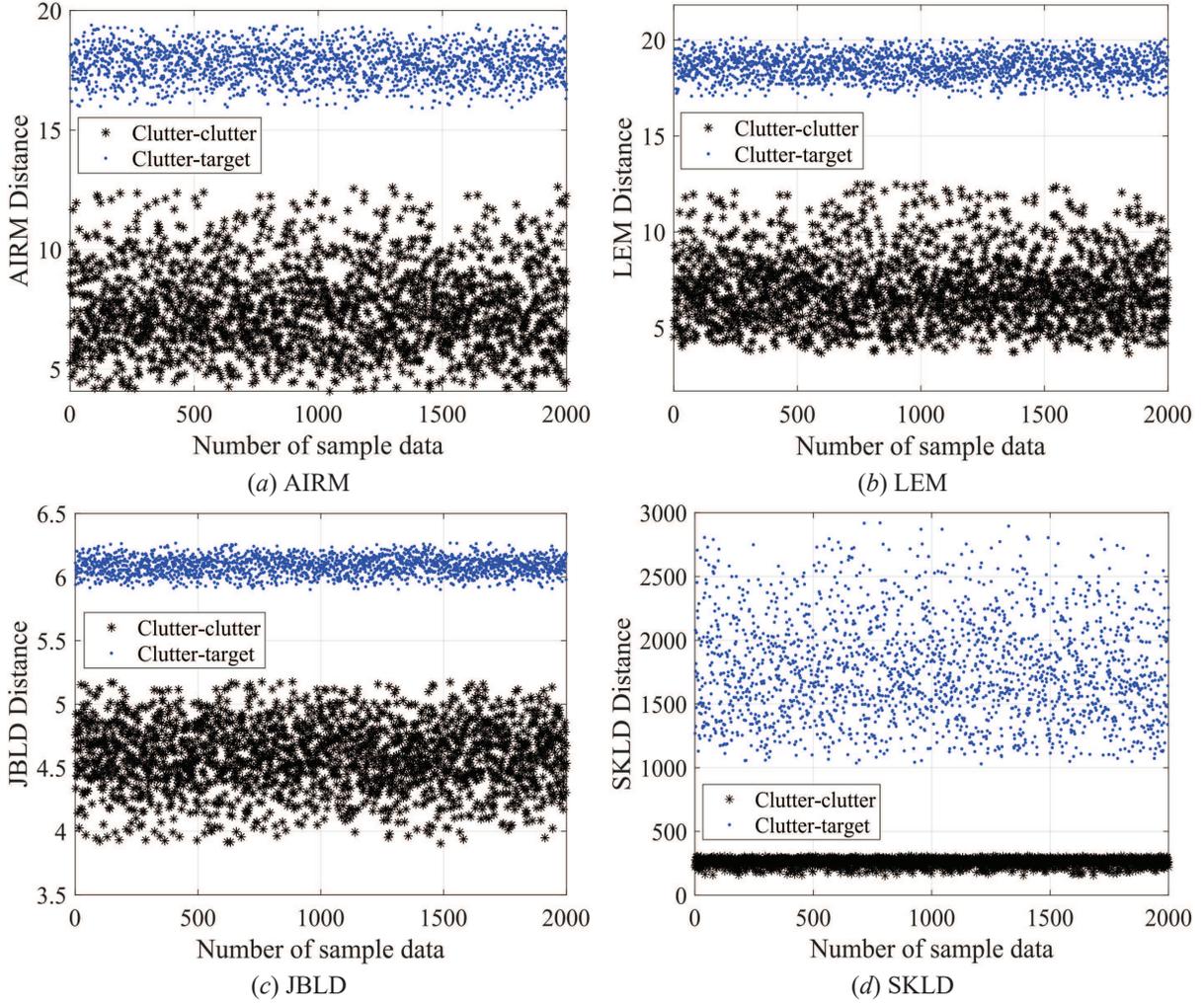}
\caption{Clutter-target distance for different measures}
\label{Distance}
\end{figure*}

\subsection{Comparison with Relevant Algorithms}

In order to verify the effectiveness of the proposed detectors, we compare the proposed methods with several well-received algorithms. For convenience, the following abbreviations are adopted.
\begin{itemize}
  \item AMF: The traditional adaptive matched filter \cite{4104190}.
  \item Benchmark: The AMF with known covariance matrix. It is the optimal performance for the types of AMFs.
  \item TBD-MIG detectors: The MIG detectors with the total square loss (TSL), the total von-Neumann (TVN) divergence, and the total log-determinant (TLD) divergence \cite{HUA2018232,huaetal2020}.
\end{itemize}

Unlike the AMF, the optimal performance of MIG detectors is not the MIG detectors with the known CCM since the detection performance is closely related to the discrimination between the target signal and the clutter. To decrease the computational load, we choose  the probability of false alarm as $P_{fa}=10^{-3}$. A number of $100/P_{fa}$ independent trials are repeated to estimate  the threshold, while  $2000$ independent trials are repeated to estimate  the probability of detection $P_d$.

\subsection{Simulation Results and Discussions}

By using the training dataset, we derive the three projection matrices that transforms the $N\times N$ HPD matrices to $M\times M$ HPD matrices for $M = 8$, $6$, $4$, and $2$, respectively.  We then perform the signal detection on these manifolds for different size of $K$ secondary data, where $K = M, 1.5M$, and $2M$, respectively. Statistically, as $K$ increases, the estimate accuracy of the CCM improves, that will certainly affect the detection performance. Figs. \ref{K_N}, \ref{K_15_N} and \ref{K_2_N} plot the $P_d$ vs SCR for the proposed MIG detectors and their corresponding counterparts as well as the TBD-MIG detectors and the AMF under different sizes of secondary data. The AMF with the known CCM is also provided as a benchmark. Figs. \ref{K_N}, \ref{K_15_N} and \ref{K_2_N} show that the detection performances of all the considered detectors improve as $K$ becomes larger. In Fig. \ref{K_N}, the MIG detectors can still work well while the $P_d$ of the AMF is very low, because that the estimate accuracy of the SCM is worse when $K=M$. It should  also be noticed that all the MIG detectors with manifold projection have better performances compared with their unprojected counterparts, namely the original MIG detectors, and both the projected and unprojected MIG detectors outperform the AMF except for the SKLD-MIG detector under $K=2M$.  In other words, the manifold projection can promote the discriminative power of HPD matrices. Moreover, the TBD-MIG detectors outperform the unprojected AIRM and LEM MIG detectors and both the projected and unprojected SKLD MIG detectors.


\begin{figure*}[htbp]
\centering
\includegraphics[width=16cm,angle=0]{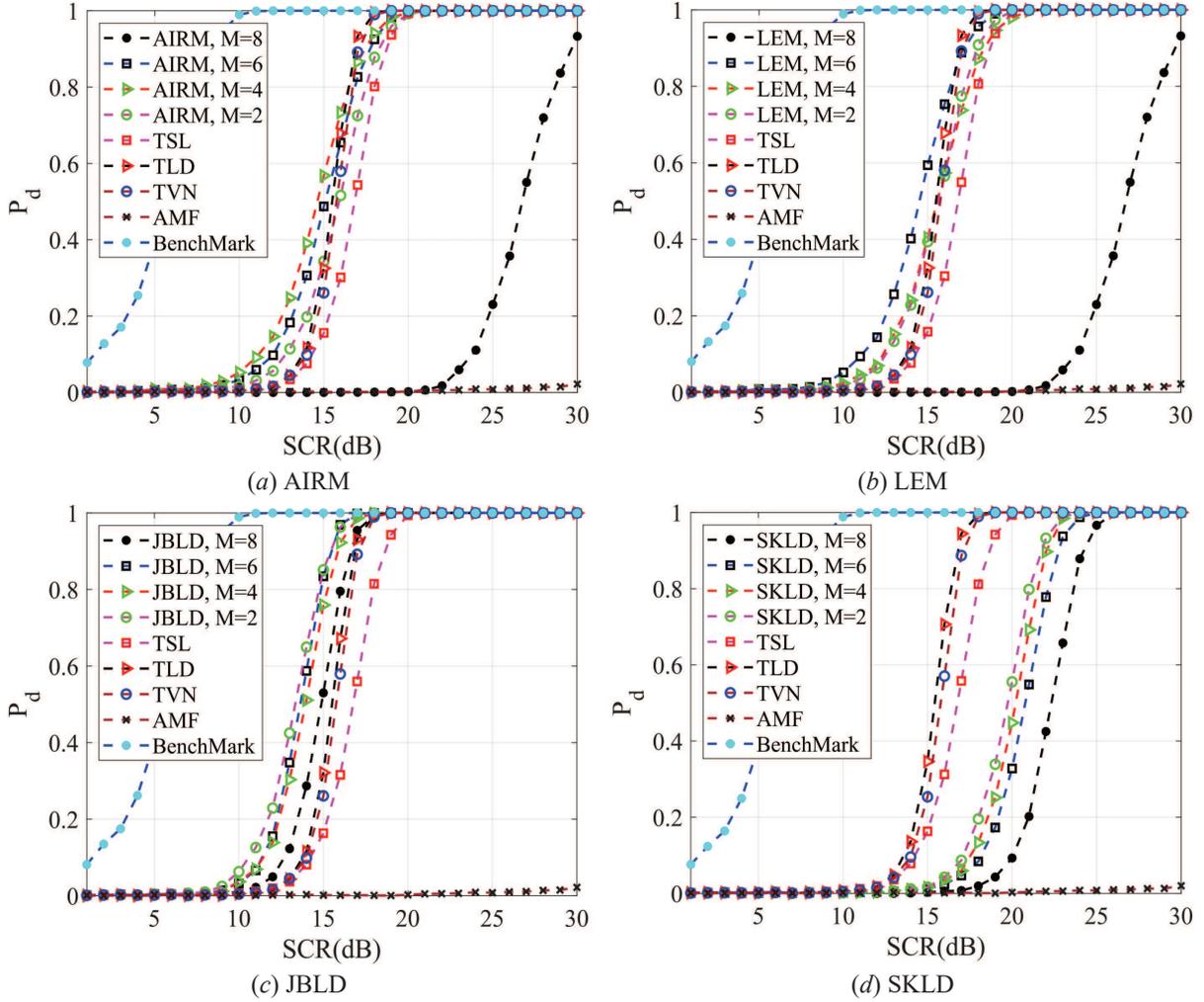}
\caption{The plots of $P_d$ vs SCR for $K=M$ in the nonhomogeneous clutter with two interferences.}
\label{K_N}
\end{figure*}


\begin{figure*}[htbp]
\centering
\includegraphics[width=16cm,angle=0]{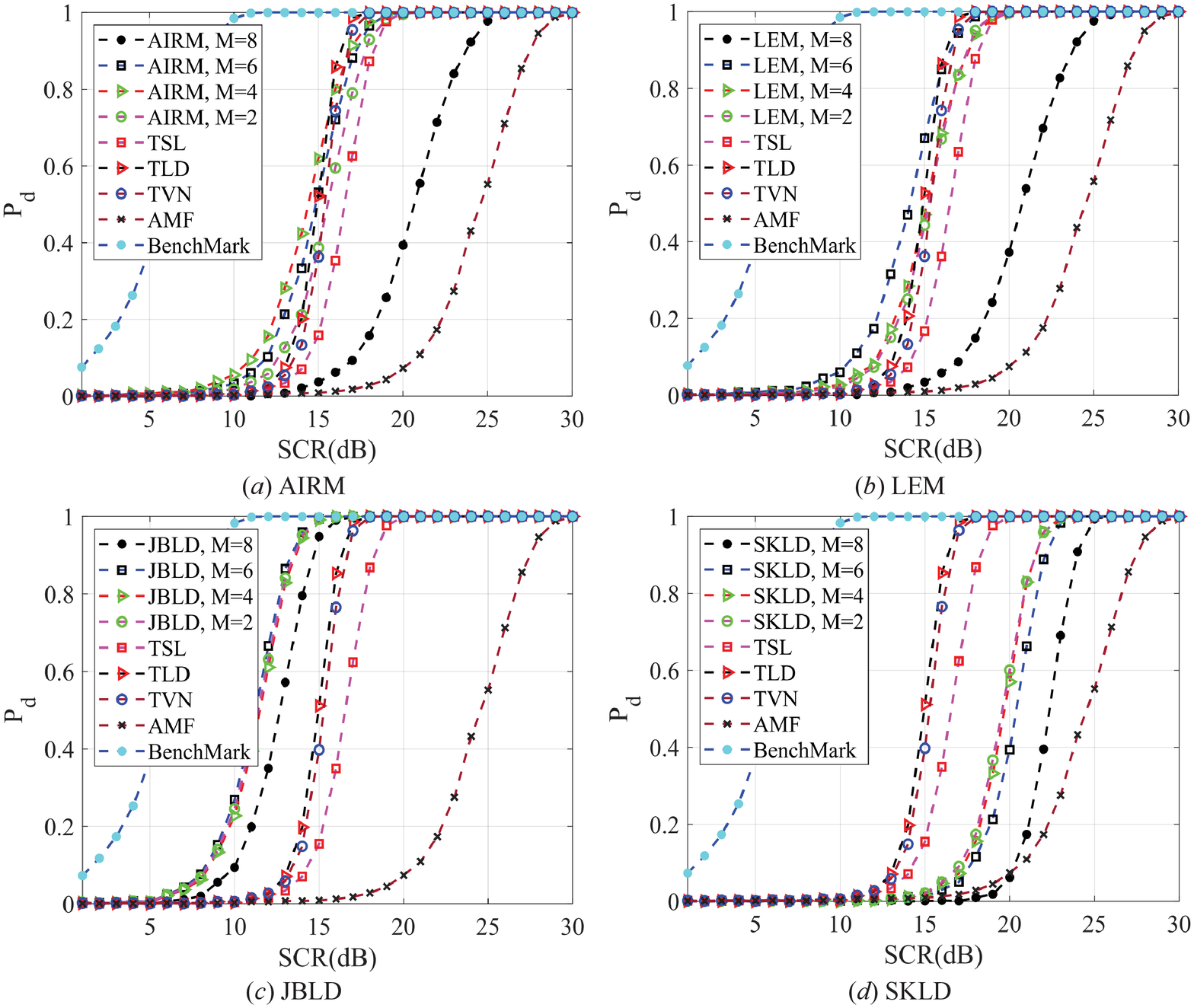}
\caption{The plots of $P_d$ vs SCR for $K=1.5M$ in the nonhomogeneous clutter with two interferences.}
\label{K_15_N}
\end{figure*}


\begin{figure*}[htbp]
\centering
\includegraphics[width=16cm,angle=0]{Fig/Pd_12}
\caption{The plots of $P_d$ vs SCR for $K=2M$ in the nonhomogeneous clutter with two interferences.}
\label{K_2_N}
\end{figure*}

To analyze the difference in the detection performance for different measure-based MIG detectors. Fig. \ref{Measures} shows the results of $P_d$ vs SCRs for different measures. It is obvious that the JBLD MIG detector has the best performance. Detection performance of the AIRM, LEM, and TBD is similar and they are better than the SKLD when $K>M$. It should be noted that it is probably difficult to determine detector which is universally better compared with the others since performance of the detection methods can also depend on features of the clutter. One important future research would be determining  the best detector against a specific type of clutter.

\begin{figure}[htbp]
\centering
\subfigure[$K=N$]  {\includegraphics[width=8.6cm,angle=0]{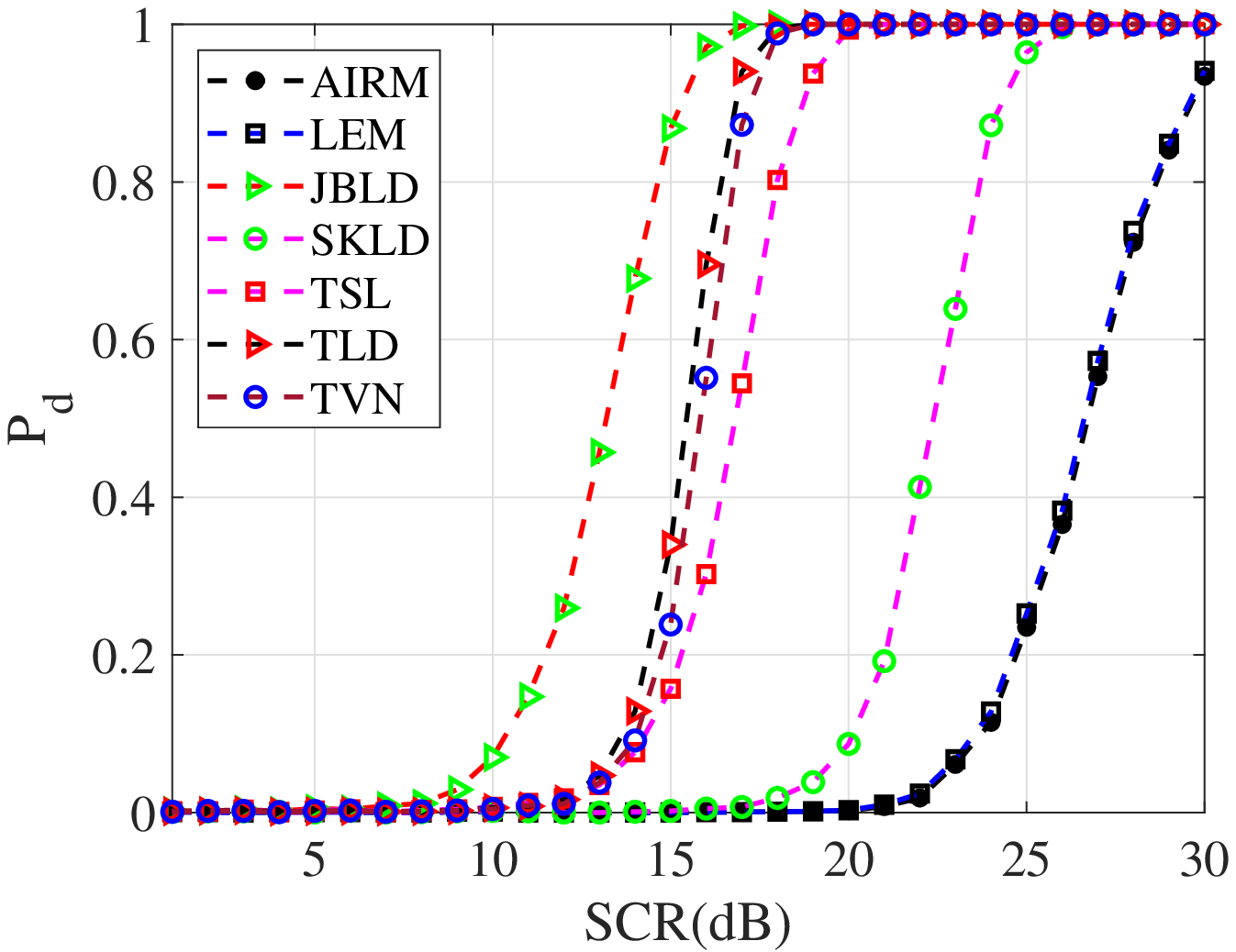}}
\subfigure[$K=1.5N$]  {\includegraphics[width=8.6cm,angle=0]{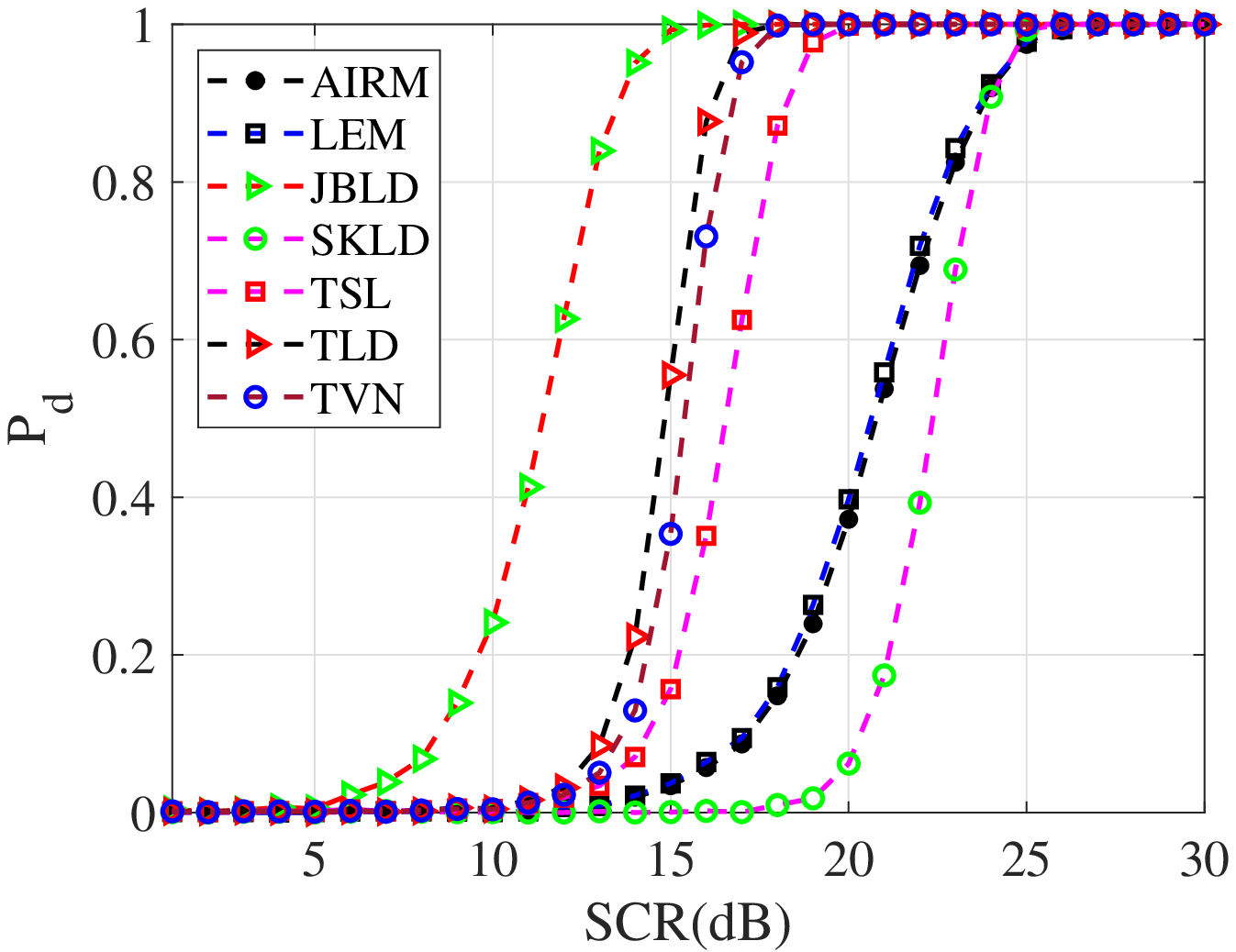}}
\subfigure[$K=2N$] {\includegraphics[width=8.6cm,angle=0]{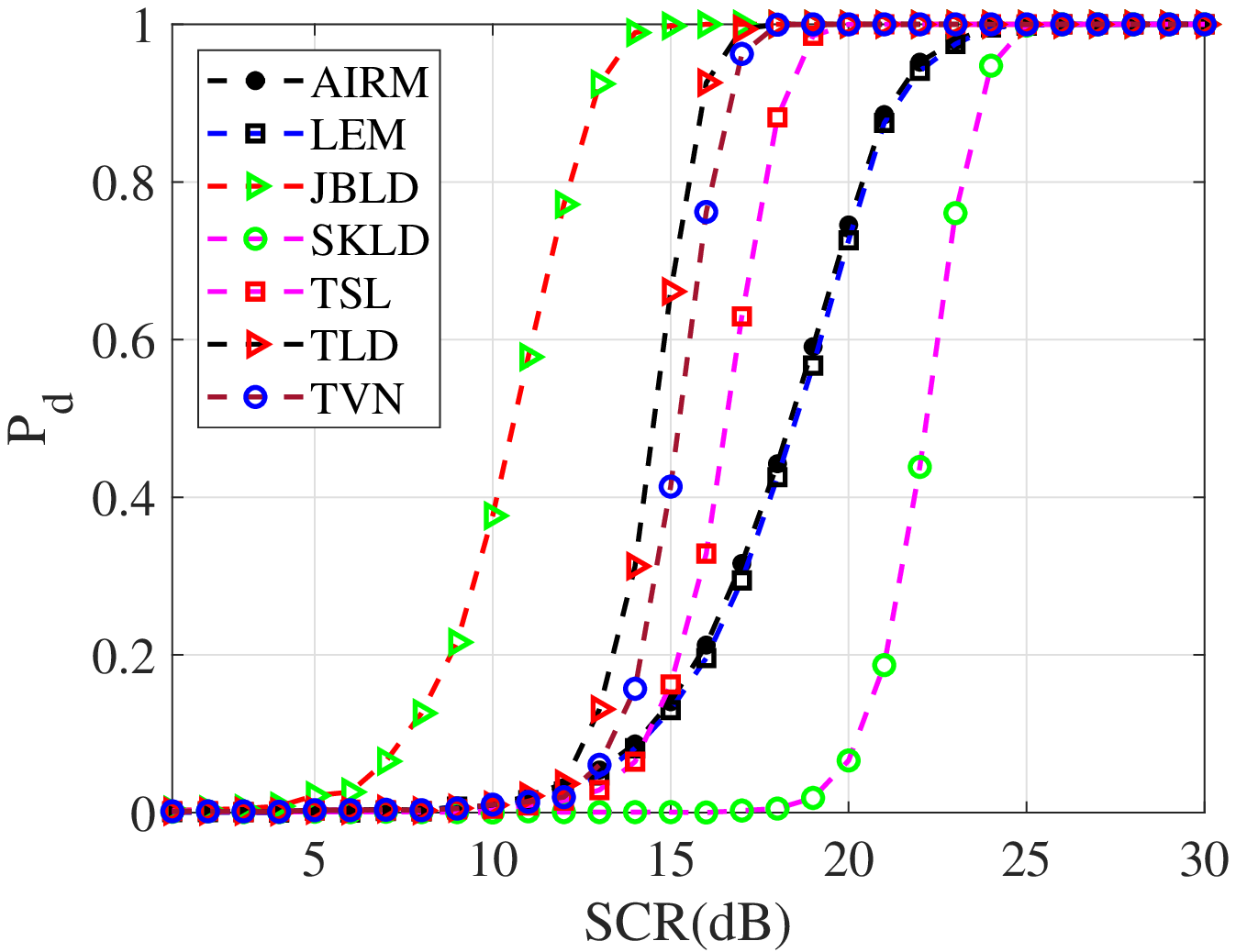}}
\caption{$P_d$ vs SCR for different measures}
\label{Measures}
\end{figure}

\section{Conclusions}
\label{sec:con}
In this paper, we proposed a class of learning discriminative MIG detectors in the unsupervised scenario, and applied them for signal detection in nonhomogeneous clutter. The sample data was interpreted as an HPD matrix, and the secondary HPD matrices were used to estimate the CCM. Inspired by the principle of PCA, we constructed a manifold projection that maps higher-order HPD matrices to a more discriminative lower-dimensional HPD manifold with maximum data variance. Learning the projection with maximum data variance   could be formulated as a two-step mini-max optimization problem in the Stiefel manifold and the lower-dimensional HPD manifold, respectively, which was solved by the RGD algorithm.  Four discriminative MIG detectors were designed in the lower-dimensional  manifold with respect to the LEM distance, the AIRM distance, the JBLD and SKLD, respectively. Simulation results  showed that the proposed MIG detectors could outperform their state-of-the-art counterparts and the AMF in nonhomogeneous clutter.

Potential future research includes the distributed target detection and further studies of optimization problems in Riemannian manifolds of matrices, such as the Stiefel manifold in the current study. This is certainly based on deep understanding of the geometric structures of these manifolds.  Practical applications to radar or sonar via the discriminative MIG detectors  should be interesting as well.

\appendices
\section{Proof of Proposition \ref{prop:SKLDm}: The SKLD mean}
\label{app:SKLD}

The SKLD mean of  HPD matrices $\{\bm{R}_k \}_{k \in [K]}$
is the minimizer of the function
\begin{equation*}
\begin{aligned}
F(\bm{R})&=\sum_{k=1}^Kd_S^2(\bm{R}_k,\bm{R})\\
&=\sum_{k=1}^K\operatorname{tr}\left(\bm{R}_k^{-1}\bm{R}+\bm{R}^{-1}\bm{R}^{-1}\right)-2NK
\end{aligned}
\end{equation*}
defined in $\mathscr{P}(N,\mathbb{C})$. Using the definition \eqref{def:gra}, gradient of the function with respect to the Frobenius metric can be obtained as
\begin{equation*}
\nabla F(\bm{R})=\sum_{k=1}^K \left(\bm{R}_k^{-1}-\bm{R}^{-1}\bm{R}_k\bm{R}^{-1}\right).
\end{equation*}
The stationary condition $\nabla F(\widehat{\bm{R}})=0$, i.e.,
\begin{equation*}
\sum_{k=1}^K\bm{R}_k^{-1}=\widehat{\bm{R}}^{-1}\left(\sum_{k=1}^K\bm{R}_k\right)\widehat{\bm{R}}^{-1},
\end{equation*}
can be rewritten, by multiplying $\widehat{\bm{R}}$ from the left and the right simultaneously, as
\begin{equation*}
\widehat{\bm{R}}\left(\sum_{k=1}^K\bm{R}_k^{-1}\right)\widehat{\bm{R}}=\sum_{k=1}^K\bm{R}_k.
\end{equation*}
It is a special (continuous time) algebraic Riccati equation
\begin{equation}\label{eq:ARE}
\widehat{\bm{R}}\bm{A}\widehat{\bm{R}}=\bm{B},
\end{equation}
where $\widehat{\bm{R}}\in\mathscr{P}(N,\mathbb{C})$ is unknown and the known coefficient matrices $\bm{A}$ and $\bm{B}$ are both HPD:
\begin{equation}\label{eq:AB}
\bm{A}=\frac{1}{K}\sum_{k=1}^K\bm{R}_k^{-1},\quad \bm{B}=\frac{1}{K}\sum_{k=1}^K\bm{R}_k.
\end{equation}
Next we will solve the Eq. \eqref{eq:ARE}. Multiplying by $\bm{A}^{1/2}$ on both sides, we have
\begin{equation*}
\bm{A}^{1/2}\widehat{\bm{R}}\bm{A}\widehat{\bm{R}}\bm{A}^{1/2}=\bm{A}^{1/2}\bm{B}\bm{A}^{1/2}.
\end{equation*}
Noticing that its left-hand-side is exactly
\begin{equation*}
\left(\bm{A}^{1/2}\widehat{\bm{R}}\bm{A}^{1/2}\right)^2,
\end{equation*}
we obtain
\begin{equation*}
\bm{A}^{1/2}\widehat{\bm{R}}\bm{A}^{1/2}=\left(\bm{A}^{1/2}\bm{B}\bm{A}^{1/2}\right)^{1/2},
\end{equation*}
and hence
\begin{equation*}
\widehat{\bm{R}}=\bm{A}^{-1/2}\left( \bm{A}^{1/2}\bm{B}\bm{A}^{1/2} \right)^{1/2}\bm{A}^{-1/2},
\end{equation*}
where $\bm{A}$ and $\bm{B}$ are given by \eqref{eq:AB}.
This completes the proof.

\section{Proof of Proposition \ref{prop:LEMgra}}
\label{app:LEMgra}

At step $t$, the loss function with respect to the LEM reads
\begin{equation*}
\begin{aligned}
\psi(\bm{W}) &= -\frac{1}{J+K} \sum_{i=1}^{J+K} \norm{\operatorname{Log}\left(\bm{W}^{\operatorname{H}}\bm{R}_i\bm{W}\right)-\operatorname{Log}\bm{\widehat{Z}}_t}^2.
\end{aligned}
\end{equation*}
It suffices to show the Euclidean gradient of the function $F_i(\bm{W})$ with respect to the extended Frobenius metric \eqref{eq:fmst},
where
\begin{equation*}
F_i(\bm{W})=\norm{\operatorname{Log}\left(\bm{W}^{\operatorname{H}}\bm{R}_i\bm{W}\right)-\operatorname{Log}\bm{\widehat{Z}}_t}^2.
\end{equation*}

Writing
\begin{equation}\label{eq:A}
\bm{A}(\varepsilon)=(\bm{W}+\varepsilon \bm{X})^{\operatorname{H}}\bm{R}_i(\bm{W}+\varepsilon \bm{X})
\end{equation}
and using Eq. \eqref{def:gra}, for an ${N\times M}$ matrix $\bm{X}$, we have
\begin{equation*}
\begin{aligned}
\langle \nabla &F_i(\bm{W}),\bm{X}\rangle = \frac{\operatorname{d}}{\operatorname{d}\!\varepsilon}\Big|_{\varepsilon=0}F_i(\bm{W}+\varepsilon \bm{X})\\
& =\frac{\operatorname{d}}{\operatorname{d}\!\varepsilon}\Big|_{\varepsilon=0}\operatorname{tr}\left(\operatorname{Log}\bm{A}(\varepsilon)-\operatorname{Log}\widehat{\bm{Z}}_t\right)^2\\
&=2\operatorname{tr}\left(\left(\operatorname{Log}\bm{A}(0)-\operatorname{Log}\widehat{\bm{Z}}_t\right)\frac{\operatorname{d}}{\operatorname{d}\!\varepsilon}\Big|_{\varepsilon=0}\operatorname{Log}\bm{A}(\varepsilon)\right).
\end{aligned}
\end{equation*}
Noticing
\begin{equation*}
\frac{\operatorname{d}}{\operatorname{d}\!\varepsilon}\Big|_{\varepsilon=0}\bm{A}(\varepsilon)= \bm{X}^{\operatorname{H}}\bm{R}_i\bm{W}+\bm{W}^{\operatorname{H}}\bm{R}_i \bm{X}
\end{equation*}
and applying Lemma \ref{lem:aa}, the above equality becomes
\begin{equation*}
\begin{aligned}
\langle \nabla F_i&(\bm{W}),\bm{X}\rangle =4\operatorname{tr}\Big(\left(\operatorname{Log}\bm{A}(0)-\operatorname{Log}\widehat{\bm{Z}}_t\right)\\
&\times \int_0^1
[(\bm{A}(0)-\bm{I})s+\bm{I}]^{-1}\bm{W}^{\operatorname{H}}\bm{R}_i \bm{X}  \\
&\quad\quad\quad\quad\times [(\bm{A}(0)-\bm{I})s+\bm{I}]^{-1}\operatorname{d}\!s\Big)\\
=\: &4\operatorname{tr}\Big(\Big(\bm{A}^{-1}(0) \operatorname{Log}\bm{A}(0) -\int_0^1[(\bm{A}(0)-\bm{I})s+\bm{I}]^{-1} \\
&\times \left(\operatorname{Log}\widehat{\bm{Z}}_t \right)  [(\bm{A}(0)-\bm{I})s+\bm{I}]^{-1}\operatorname{d}\!s \Big) \bm{W}^{\operatorname{H}}\bm{R}_i \bm{X}\Big).
\end{aligned}
\end{equation*}
From the defintion $\langle \nabla F_i(\bm{W}),\bm{X}\rangle=\operatorname{tr}\left(\left(\nabla F_i(\bm{W})\right)^{\operatorname{H}}\bm{X}\right)$, we immediately have that
\begin{equation*}
\begin{aligned}
\nabla F_i(\bm{W})&=4\bm{R}_i\bm{W}\Big(\bm{A}^{-1}(0) \operatorname{Log}\bm{A}(0) \\
&\quad-\int_0^1[(\bm{A}(0)-\bm{I})s+\bm{I}]^{-1} \left(\operatorname{Log}\widehat{\bm{Z}}_t \right)\\
&\quad\quad\quad \quad \times   [(\bm{A}(0)-\bm{I})s+\bm{I}]^{-1}\operatorname{d}\!s \Big),
\end{aligned}
\end{equation*}
where $\bm{A}(0)=\bm{W}^{\operatorname{H}}\bm{R}_i\bm{W}$, i.e.,  the matrix $\bm{V}$ in Proposition \ref{prop:LEMgra}. This completes the proof.


\section{Proof of Proposition \ref{prop:JBLDgra}}
\label{app:JBLDgra}

The following lemma will be used.

\begin{lem}\label{lem:det}
For any invertible matrix $\bm{B}(\varepsilon)$, we have
\begin{equation*}
\frac{\operatorname{d}}{\operatorname{d}\!\varepsilon}\det \bm{B}(\varepsilon)=\det \bm{B}(\varepsilon) \operatorname{tr}\left(\bm{B}^{-1}(\varepsilon)\frac{\operatorname{d}}{\operatorname{d}\!\varepsilon}\bm{B}(\varepsilon)\right).
\end{equation*}
\end{lem}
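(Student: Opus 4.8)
The plan is to establish this classical Jacobi formula by reducing the derivative to a first-order perturbation and exploiting multiplicativity of the determinant. Fix a value of $\varepsilon$ at which $\bm{B}(\varepsilon)$ is invertible (as hypothesised), and write $\dot{\bm{B}}(\varepsilon)=\frac{\operatorname{d}}{\operatorname{d}\!\varepsilon}\bm{B}(\varepsilon)$, so that the increment expands as $\bm{B}(\varepsilon+h)=\bm{B}(\varepsilon)+h\,\dot{\bm{B}}(\varepsilon)+o(h)$. Since $\bm{B}(\varepsilon)$ is invertible, I factor it out using $\det(\bm{X}\bm{Y})=\det\bm{X}\det\bm{Y}$, obtaining
\begin{equation*}
\det\bm{B}(\varepsilon+h)=\det\bm{B}(\varepsilon)\,\det\!\left(\bm{I}+h\,\bm{B}^{-1}(\varepsilon)\dot{\bm{B}}(\varepsilon)+o(h)\right).
\end{equation*}

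The key step is the first-order expansion $\det(\bm{I}+h\bm{M})=1+h\operatorname{tr}(\bm{M})+O(h^2)$. I would justify this from the Leibniz expansion of the determinant, where only the product of diagonal entries contributes at first order in $h$ and that product is $\prod_i(1+hM_{ii})=1+h\sum_i M_{ii}+O(h^2)$; equivalently, it follows from the characteristic polynomial, since the coefficient of the linear term is the sum of the eigenvalues of $\bm{M}$, i.e. $\operatorname{tr}\bm{M}$. Applying this with $\bm{M}=\bm{B}^{-1}(\varepsilon)\dot{\bm{B}}(\varepsilon)$ gives
\begin{equation*}
\det\bm{B}(\varepsilon+h)=\det\bm{B}(\varepsilon)\left(1+h\operatorname{tr}\!\left(\bm{B}^{-1}(\varepsilon)\dot{\bm{B}}(\varepsilon)\right)+o(h)\right),
\end{equation*}
and subtracting $\det\bm{B}(\varepsilon)$, dividing by $h$, and letting $h\to 0$ yields the claimed identity.

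There is no serious obstacle here; the lemma is the standard Jacobi formula, and the only points requiring care are that invertibility of $\bm{B}(\varepsilon)$ is precisely what legitimises both the factorisation and the appearance of $\bm{B}^{-1}(\varepsilon)$, and that the remainder is genuinely $o(h)$ so that it drops out in the limit. A fully equivalent route is the adjugate identity $\frac{\partial\det\bm{B}}{\partial B_{ij}}=[\operatorname{adj}(\bm{B})]_{ji}$ combined with the chain rule, which gives $\frac{\operatorname{d}}{\operatorname{d}\!\varepsilon}\det\bm{B}=\operatorname{tr}(\operatorname{adj}(\bm{B})\dot{\bm{B}})$ and then $\operatorname{adj}(\bm{B})=\det(\bm{B})\bm{B}^{-1}$; I would mention this only in passing since the perturbation argument is shorter.

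Alternatively, and more in the spirit of the machinery already assembled in the paper, I could use the identity $\det\bm{B}=\exp\!\left(\operatorname{tr}\operatorname{Log}\bm{B}\right)$, valid whenever the principal logarithm of Lemma \ref{lem:aa} exists. Differentiating gives $\frac{\operatorname{d}}{\operatorname{d}\!\varepsilon}\det\bm{B}=\det\bm{B}\,\frac{\operatorname{d}}{\operatorname{d}\!\varepsilon}\operatorname{tr}\operatorname{Log}\bm{B}$; then Lemma \ref{lem:aa}(iii) supplies $\frac{\operatorname{d}}{\operatorname{d}\!\varepsilon}\operatorname{Log}\bm{B}=\int_0^1[(\bm{B}-\bm{I})s+\bm{I}]^{-1}\dot{\bm{B}}[(\bm{B}-\bm{I})s+\bm{I}]^{-1}\operatorname{d}\!s$, and taking the trace, invoking its cyclic invariance together with the evaluation $\int_0^1[(\bm{B}-\bm{I})s+\bm{I}]^{-2}\operatorname{d}\!s=\bm{B}^{-1}$ from Lemma \ref{lem:aa}(ii), yields $\frac{\operatorname{d}}{\operatorname{d}\!\varepsilon}\operatorname{tr}\operatorname{Log}\bm{B}=\operatorname{tr}(\bm{B}^{-1}\dot{\bm{B}})$. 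This recovers the result while reusing exactly the lemmas already available, at the cost of presuming the eigenvalue condition that guarantees existence of $\operatorname{Log}\bm{B}$; I would therefore present the perturbation argument as the primary proof and treat this route as a remark.
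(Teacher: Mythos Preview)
Your argument is correct; this is exactly the standard derivation of Jacobi's formula, and your alternative via $\det\bm{B}=\exp(\operatorname{tr}\operatorname{Log}\bm{B})$ together with Lemma~\ref{lem:aa} is also sound under the stated eigenvalue hypothesis. Note, however, that the paper does not actually prove Lemma~\ref{lem:det}: it is simply stated as a known identity and then invoked in the computation that follows, so there is no approach in the paper to compare against. Your perturbation proof is an appropriate self-contained justification.
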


Now the loss function can be written as
\begin{equation*}
\psi(\bm{W})=-\frac{1}{J+K}\sum_{i=1}^{J+K}F_i(\bm{W})
\end{equation*}
with
\begin{equation*}
F_i(\bm{W})=\ln\det\left(\frac{\bm{A}(0)+\widehat{\bm{Z}}_t}{2}\right)-\frac12\ln\det\left(\bm{A}(0)\widehat{\bm{Z}}_t\right),
\end{equation*}
where $\bm{A}(\varepsilon)$ is given by Eq. \eqref{eq:A} and  $\bm{A}(0)=\bm{W}^{\operatorname{H}}\bm{R}_i\bm{W}$. Using Lemma \ref{lem:det}, definition of the Euclidean gradient gives
\begin{equation*}
\begin{aligned}
\langle \nabla F_i&(\bm{W}),\bm{X}\rangle = \frac{\operatorname{d}}{\operatorname{d}\!\varepsilon}\Big|_{\varepsilon=0}\ln\det\left(\frac{\bm{A}(\varepsilon)+\widehat{\bm{Z}}_t}{2}\right)\\
&\quad\quad\quad  -\frac12\frac{\operatorname{d}}{\operatorname{d}\!\varepsilon}\Big|_{\varepsilon=0}\ln\det\left(\bm{A}(\varepsilon)\widehat{\bm{Z}}_t\right)\\
&=\operatorname{tr}\left(\left(2\left(\bm{A}(0)+\widehat{\bm{Z}}_t\right)^{-1}-\bm{A}^{-1}(0)\right)\bm{W}^{\operatorname{H}}\bm{R}_i\bm{X}\right).
\end{aligned}
\end{equation*}
Consequently, we have
\begin{equation*}
 \nabla F_i(\bm{W})=\bm{R}_i\bm{W}\left(2\left(\bm{A}(0)+\widehat{\bm{Z}}_t\right)^{-1}-\bm{A}^{-1}(0)\right)
 \end{equation*}
and 
\begin{equation*}
\nabla \psi(\bm{W})=-\frac{1}{J+K}\sum_{i=1}^{J+K}\nabla F_i(\bm{W}),
\end{equation*}
that finishes the proof.

\section{Proof of Proposition \ref{prop:SKLDgra}}
\label{app:SKLDgra}

Similarly, we write the loss function as
\begin{equation*}
\psi(\bm{W})=-\frac{1}{J+K}\sum_{i=1}^{J+K}F_i(\bm{W}),
\end{equation*}
where
\begin{equation*}
\begin{aligned}
F_i(\bm{W})& =\frac12 \operatorname{tr}\Bigg( \bigg( \bm{W}^{\operatorname{H}}\bm{R}_i\bm{W} \bigg)^{-1}\bm{\widehat{Z}}_t \\
&\quad \quad \quad \quad + \bm{\widehat{Z}}_t^{-1}\bigg( \bm{W}^{\operatorname{H}}\bm{R}_i\bm{W} \bigg) - 2\bm{I}\Bigg).
\end{aligned}
\end{equation*}
Euclidean gradient of the function $F_i(\bm{W})$ is given by
\begin{equation*}
\begin{aligned}
\langle \nabla F_i&(\bm{W}),\bm{X}\rangle = \frac12\frac{\operatorname{d}}{\operatorname{d}\!\varepsilon}\Big|_{\varepsilon=0}\operatorname{tr}\left(\bm{A}^{-1}(\varepsilon)\widehat{\bm{Z}}_t+\widehat{\bm{Z}}_t^{-1}\bm{A}(\varepsilon)\right)\\
&=\frac12 \operatorname{tr}\Big(\left(\widehat{\bm{Z}}^{-1}_t-\bm{A}^{-1}(0)\widehat{\bm{Z}}_t^{-1}\bm{A}^{-1}(0)\right)\\
&\quad\quad\quad\quad \times \left(\bm{X}^{\operatorname{H}}\bm{R}_i\bm{W}+\bm{W}^{\operatorname{H}}\bm{R}_i \bm{X}\right)\Big)\\
&=\operatorname{tr}\left(\left(\widehat{\bm{Z}}^{-1}_t-\bm{A}^{-1}(0)\widehat{\bm{Z}}_t\bm{A}^{-1}(0)\right)\bm{W}^{\operatorname{H}}\bm{R}_i \bm{X}\right),
\end{aligned}
\end{equation*}
and consequently, we obtain
\begin{equation*}
\nabla F_i(\bm{W})=\bm{R}_i\bm{W}\left(\widehat{\bm{Z}}^{-1}_t-\bm{A}^{-1}(0)\widehat{\bm{Z}}_t\bm{A}^{-1}(0)\right).
\end{equation*}
Here, $\bm{A}(\varepsilon)$ is given by \eqref{eq:A} and $\bm{A}(0)=\bm{W}^{\operatorname{H}}\bm{R}_i\bm{W}$. This completes the proof.

\ifCLASSOPTIONcaptionsoff
  \newpage
\fi

\bibliographystyle{IEEEtran}

\bibliography{mybibfile}

\end{document}